\documentclass[lettersize,journal]{IEEEtran}

\usepackage[paperwidth=7.875in, paperheight=10.75in, 
            top=0.75in, bottom=1in, left=0.625in, right=0.625in]{geometry}
\usepackage[T1]{fontenc}
\usepackage{mathrsfs}
\usepackage{bm}
\usepackage{amsmath}
\usepackage{breqn}
\usepackage{amsthm}
\usepackage{amssymb}
\usepackage{graphicx}
\PassOptionsToPackage{normalem}{ulem}
 \usepackage{subfig}
\usepackage{lipsum}
\usepackage{cite}
\usepackage[english]{babel}

\usepackage{algorithm}
\usepackage{algpseudocode}
\DeclareMathOperator*{\argmin}{argmin}
\usepackage{setspace}
\usepackage{xcolor}
\renewcommand{\normalsize}{\fontsize{9.5pt}{11.5pt}\selectfont}
\normalsize
\makeatletter

%%%%%%%%%%%%%%%%%%%%%%%%%%%%%% LyX specific LaTeX commands.
%% Because html converters don't know tabularnewline

%%%%%%%%%%%%%%%%%%%%%%%%%%%%%% Textclass specific LaTeX commands.

%  \theoremstyle{definition}
  
\theoremstyle{plain}
\newtheorem{theorem}{\protect\theoremname}
  \theoremstyle{plain}
  \newtheorem{conjecture}{\protect\conjecturename}
  \theoremstyle{plain}
  
  \theoremstyle{plain}
  \newtheorem{definition}{\protect\definitionname}
  \theoremstyle{definition}
   \newtheorem{lemma}{\protect\lemmaname}
  \theoremstyle{remark}

\theoremstyle{assumption}
    \theoremstyle{proposition}

\theoremstyle{algorithm}

\interdisplaylinepenalty=2500

\setkeys{Gin}{width=1.0\columnwidth}

\makeatother

\usepackage{babel}
  \providecommand{\definitionname}{Definition}
  \providecommand{\lemmaname}{Lemma}
  \providecommand{\propositionname}{Proposition}
  \providecommand{\remarkname}{Remark}
\providecommand{\theoremname}{Theorem}
\providecommand{\conjecturename}{Conjecture}
\providecommand{\assumptionname}{Assumption}

\pagenumbering{gobble}
%\usepackage[sorting = none, backend = bibtex, style=IEEEtran]{biblatex}
% \usepackage[sorting = none, backend = bibtex, style=ieee]{biblatex}
% \addbibresource{ref_paper.bib}
% %
\begin{document}

 \title{Age-of-information minimization under energy harvesting and  non-stationary environment 
 \thanks{The authors are with the Department of Electrical Engineering, Indian Institute of Technology, Delhi (email:   akanksha.jaiswal@ee.iitd.ac.in,      arpanc@ee.iitd.ac.in.  }
 \thanks{%The work of A.C. was supported by the grants: (i) project no. GP/2021/ISSC/022 by IHFC, Delhi, and (ii) SERB Core Research grant CRG/2022/003707. \\
This paper's preliminary version was published in \cite{jaiswal2023age}.}
}

 \author{
Akanksha Jaiswal, Arpan Chattopadhyay \\
\vspace{-0.1in}
 }

\maketitle

\begin{abstract} 
This work addresses the problem of minimizing the age of information (AoI) in a wireless sensor network consisting of a sink node and  multiple  potentially mobile energy harvesting (EH) sources inducing non-stationarity in the environmemnt. At each time instant, the central scheduler selects one of the mobile sources to probe the quality of its channel to the sink node, and then the assessed channel quality is utilized to determine whether a source will collect a sample and send the packet to the sink. For the single source case, we assume that the   channel quality is known at each time instant, model the problem of AoI minimization as a Markov decision process (MDP), and prove the optimal sampling policy's threshold structure. We then use this threshold structure and propose an age-energy-channel based sliding window upper confidence bound reinforcement learning (AEC-SW-UCRL2) algorithm to handle unknown and time-varying energy harvesting rate and channel statistics, motivated by the popular SW-UCRL2 algorithm for non-stationary reinforcement learning (RL). This algorithm is applicable when an upper bound is available for the total variation of each of these quantities over a time horizon. Furthermore, in situations where these variation budgets are not accessible, we introduce a novel technique called age-energy-channel based bandit-over-reinforcement learning (AEC-BORL) algorithm, motivated by the well-known BORL algorithm. For the multiple source case, we demonstrate that the AoI minimization problem can be formulated as a constrained MDP (CMDP) which can be relaxed using a Lagrange multiplier. We decouple the problem into various sub-problems across source nodes, and also derive    Whittle’s index based source scheduling policy for probing as well as an optimal threshold policy for source sampling. We next leverage this Whittle's index and threshold structure to develop a Whittle's index and threshold based sliding window upper confidence bound reinforcement learning (WIT-SW-UCRL2) algorithm for unknown, time-varying energy harvesting rates and channel statistics under their respective variation budgets. Moreover, we also proposed a Whittle's index and threshold based bandit-over-reinforcement learning (WIT-BORL) algorithm for unknown variation budgets. Finally, we numerically demonstrate the efficacy of our algorithms.

\end{abstract}

\begin{IEEEkeywords}
Age-of-information,  Reinforcement learning (RL), Whittle's index.
\end{IEEEkeywords}

\section{Introduction}\label{section:introduction}
For many time-sensitive sensing and communication applications such as IoT systems, vehicle tracking, environment monitoring, smart home systems, etc., the data available at the decision-making node needs to be as fresh as possible. Age of Information (AoI  \cite{kaul2012real}) is a popular measure of
the freshness of information in such systems and is studied heavily to realize ultra-reliable low-latency communication (URLLC \cite{zhang2020analyses}) in 5G systems. At any time $t$, if the decision-making node finds that the  latest available monitoring information comes from a packet whose time-stamped
generation instant was $t'$, then  the decision making node computes the AoI as $ (t-t')$.

Nevertheless, battery constraints and restricted communication bandwidth pose two significant obstacles to the implementation of such real-time status update systems. To address the scarcity of bandwidth, it is common for multiple sources to share a common frequency band in order to transmit their time-sensitive status updates to a monitoring node. On the other hand, the challenge of battery limitations can be resolved using energy harvesting source (EH) nodes. However, in order to prevent interference among these nodes and to fully leverage the EH feature available at the source nodes, it is necessary to implement intelligent scheduling policies. Furthermore, for  mobile sensors such as UAVs and mobile robots, both energy harvesting characteristics and wireless channel statistics may vary significantly with the nodes' location and trajectory, which necessitates adaptive, learning-based scheduling and sampling policies that can handle time-varying environments.
%In this paper, we focus on AoI minimization in a remote sensing system consisting of a \textcolor{blue}{single/ multiple} EH source\textcolor{blue}{/es} and a sink node.

The recent efforts towards energy and bandwidth-efficient  URLLC  has led to a number of works on AoI minimization in EH systems. For instance, the papers \cite{wu2017optimal,bacinoglu2018achieving,ceran2021learning, abd2020aoi, jaiswal2021minimization, 10547087} have proposed optimal strategies to minimize AoI for a single EH source in various  network scenarios. In particular, \cite{wu2017optimal} has investigated the optimal status update policy for a EH source sending update packets through a noiseless channel, and has adopted an energy-aware, threshold-type status update policy. The paper \cite{bacinoglu2018achieving} discusses the trade-off between age and energy, and presents an optimal threshold policy for AoI minimization. The paper \cite{ceran2021learning} focuses on a threshold policy to minimize AoI for a single EH sensor using the Hybrid Automatic Repeat Request (HARQ) protocol and also provides RL algorithm for an unknown environment. A threshold-based age-optimal policy for an RF-powered source node providing status updates to the destination has been presented by the authors of \cite{abd2020aoi} where the policy considers the scenario that the source can collect energy or transfer data at a time. The AoI minimization problem was formulated as an MDP for a single EH source with channel probing capabilities in our previous paper \cite{jaiswal2021minimization, 10547087}, which demonstrated threshold structure in the optimal channel probing and source sampling strategies.

On the other hand, AoI minimization in networks with several source nodes vying for a common channel has also been extensively studied in the literature \cite{zakeri2023minimizing, wang2024scheduling, bedewy2019age,xie2021reinforcement, zhou2019joint, hatami2022demand,tang2020minimizing, abd2020reinforcement, hatami2021aoi}.  Using CMDP formulation, the authors of \cite{zakeri2023minimizing} minimized the weighted average AoI subject to transmission limitations in a multi-source relaying system with independent sources and error-prone connections. The paper \cite{bedewy2019age} presented the Maximum Age First (MAF) scheduling policy and Zero wait sampling policy to reduce the total average peak age for the optimal decoupled scheduling and sampling policy. However, the most effective sampling policy for reducing the total average age was determined to be a threshold policy. 
 The paper \cite{zhou2019joint} has proposed a semi-distributed sampling and updating
learning algorithm (via Q-functions) for multiple IoT devices under sampling/updating cost, a block-fading channel, and average energy constraint. To reduce the average AoI for a resource-constrained IoT network with a cache-enabled edge node between users and EH sensors, the authors of \cite{hatami2022demand} have proposed a relax-then-truncate optimal policy. However, the literature mostly deals with stationary systems where the statistical behaviour of the environment does not change with time. While there have been some recent efforts towards  AoI minimization in non-stationary environment \cite{banerjee2020fundamental, tripathi2021online, tang2020minimizing}.  For example, the authors of  \cite{tang2020minimizing} have decoupled the multi-sensor
scheduling problem into a single-sensor CMDP and utilized the optimal threshold policy structure of each decoupled sensor to design a scheduling algorithm for the AoI minimization problem in a multi-sensor network having time-varying channel states under bandwidth and power consumption constraints.  {\em However, no prior work has addressed AoI minimization in EH system under non-stationary environment. }

On the other hand, Whittle's Index \cite{whittle1988restless} for restless multi-armed bandits (RMABs) has been extensively employed to mitigate the computational complexity associated with resolving constrained MDPs ({\em i.e.,} CMDPs) for multi-source systems, as it offers a nearly optimal solution. Numerous papers have utilized Whittle's Index based policies to minimize AoI in different multi-source scheduling systems \cite{kadota2016minimizing, sun2019closed, hsu2018age, tang2021whittle, tong2022age, kriouile2021global, tripathi2019whittle}, but without taking into account energy harvesting sources and channel probing capabilities at the sources. For example, the paper \cite{kadota2016minimizing} examined a broadcast network with a single base station (BS) and many clients, and demonstrated that the optimal action in the symmetric network is to transmit a packet with the highest age in a greedy manner. For general network scenario, the authors proved that the problem is indexable and derived Whittle's index in closed form. The authors in 
 \cite{sun2019closed}   derived closed-form expressions for Whittle's indices and proposed an Index-Prioritized Random Access (IPRA) scheme to minimize AoI with multiple sources and a single sink setting.
 The average age minimization problem for a system model in which only a fraction of users can concurrently transmit packets to a BS over unreliable channels and the BS can decode these packets with some success probabilities was investigated in the paper \cite{kriouile2021global}. It employed a Cauchy criterion to demonstrate the optimality of Whittle's index policy.
The paper \cite{tong2022age} has proposed a generalized Whittle index (GWI)
and the generalized partial Whittle index (GPWI) scheduling policies to solve the decoupled constrained RMAB (CRMAB) problem formulated for AoI minimization in IoT networks having correlated sources with identical and nonidentical multi-channel settings, respectively.

In order to solve RMABs, it is essential to have information on transition dynamics, which are often unknown in advance. Several online learning methods to enhance planning efficiency in situations involving unknown transitions in RMAB have been presented in the literature \cite{ceran2021reinforcement, wang2023optimistic, mao2020model, akbarzadeh2023learning, xiong2022learning, fu2019towards, DBLP:journals/corr/abs-2004-14427, killian2021q}. One example is the UCWhittle algorithm \cite{wang2023optimistic} which is based on Whittle's index policy and it uses the upper confidence bound to learn transition probabilities. Additionally, the paper \cite{akbarzadeh2023learning} has proposed Thompson-sampling based learning algorithm for a restless bandit, \cite{xiong2022learning} introduced two index-aware RL algorithms, GM-R2MAB and UC-R2MAB for infinite-horizon average-reward RMABs and for the unknown channel statistics. Different RL methods, like UCRL2, DQN, and average-cost SARSA with LFA have been looked at and compared numerically in \cite{ceran2021reinforcement}.

The authors in \cite{auer2008near} have proposed upper confidence bound for reinforcement learning (UCRL2) algorithm for the stationary setting. Also, there are several papers that have studied non-stationary RL in MDPs and provide their regret bound \cite{ gajane2018sliding, li2019online, abdallah2016addressing,padakandla2020reinforcement, ding2022provably, cheung2023nonstationary}.  The authors of  \cite{gajane2018sliding} have developed sliding window upper-confidence bound for reinforcement learning (SW-UCRL) algorithm for non-stationary MDP with arbitrarily changing rewards and state transition distributions.  In \cite{cheung2023nonstationary},  undiscounted RL  has been considered under a  setting where both the reward and state transition probabilities vary with time under total variation budget constraints,  and provide dynamic regret bound. The authors of \cite{cheung2023nonstationary} have developed the SW-UCRL2 algorithm with confidence widening (SW-UCRL2-CW) and known budgets for its temporal variations. They have also proposed the bandit-over-reinforcement learning (BORL) algorithm which tunes system parameters (sliding window size and a confidence-widening parameter) for running SW-UCRL2-CW under unknown variation budgets. Consequently, people have started applying bandit algorithms for AoI minimization. For example, the authors in \cite{9559999} have considered the AoI bandit setting, where, a single source measures a time-varying process and schedules the status updates on one of the available error-prone channels. They have proposed AoI aware upper confidence bound (UCB \cite{auer2002finite}) and Thompson sampling \cite{russo2018tutorial,thompson1933likelihood} policies which outperform the existing age unaware policies.

 However, there is no paper that has considered optimal source scheduling and sampling problem to minimize average AoI for multiple EH source systems with a shared lossy channel under a non-stationary environment, which is the subject matter of our paper.
 In this work, we address this important problem assuming that the energy harvesting process and the channel statistics exhibit temporal non-stationarity. However, as a precursor to the problem under a non-stationary environment, we first consider a stationary environment for single source case with known, time-invariant parameters characterizing energy harvesting statistics and channel statistics, formulate the problem of AoI minimization as an MDP, and analytically establish the threshold structure of the optimal policy. Next, we consider the non-stationary environment where the mentioned parameters are unknown and time-varying, and formulate the AoI minimization problem as a non-stationary RL problem. While there have been popular algorithms such as SW-UCRL \cite{gajane2018sliding}  and BORL \cite{cheung2023nonstationary} for the non-stationary RL problem, we leverage the threshold nature of the optimal policy under the stationary setting, and propose two algorithms called AEC-SW-UCRL2 (when there is a known upper limit on the variation of the environment) and AEC-BORL (when there is no known temporal variation budget for the environment); each of these two algorithms has a critical threshold comparison step along with other steps, and the choice of the threshold values are motivated by age-aware  Thompson sampling \cite{9559999}. For the multiple source case, we first recall the results obtained from our previous work \cite{jaiswal2024whittlesindexbasedageofinformationminimization} which formulates the AoI Minimization problem for multiple sources under a stationary environment as a CMDP,   uses the Lagrangian relaxation method to relax it, and further decouples it into subproblems for various sources. For decoupled subproblems, our previous work \cite{jaiswal2024whittlesindexbasedageofinformationminimization} proposed a  Whittle’s index and threshold based source scheduling and sampling policy (WITS3 policy) which, at each time instant, schedules a source with the highest Whittle's index to probe the channel state. For the selected source, our previous work \cite{jaiswal2024whittlesindexbasedageofinformationminimization} shows that it is optimal to sample a source if the probed channel quality is above a threshold. Following this, for the nonstationary RL problem in a multi-source system, we use Whittle's index and threshold policy to propose a WIT-SW-UCRL2 algorithm for known variation budget and WIT-BORL algorithm for unknown variation budget.  Numerical results clearly show that AEC-SW-UCRL2,  AEC-BORL, WIT-SW-UCRL2, and WIT-BORL outperform their existing counterparts in terms of cumulative AoI under an unknown, non-stationary environment.

 %\subsection{Related work} 
%While the literature on multi-armed bandit (MAB) problems \cite{auer2002finite,lattimore2020bandit,bubeck2012regret} is rich,  non-stationary MAB problems have also gained much research interest in recent times  \cite{garivier2008upper,trovo2020sliding, cheung2019learning,liu2018change}.  

%\vspace{-12pt}

\subsection{Our contributions  }
\begin{enumerate}
\item In Section~\ref{section:single-sensor-single-process_stationary}, for the single source case, we study AoI minimization problem under a stationary environment. We prove that the optimal policy amounts to checking whether the packet success probability under the current channel state exceeds a threshold. 
%Alternatively, under a certain conjecture, we also prove that the optimal sampling rule is to check whether the current age exceeds a threshold.
\item In Section~\ref{section:Age Aware Sliding Window UCRL2}, we consider the AoI minimization problem for a single mobile source under time-varying energy harvesting characteristics and channel statistics,  formulate the problem as a non-stationary RL problem, and propose the AEC-SW-UCRL2 algorithm that exploits the known variation budgets on the energy generation rate and channel statistics.  Our algorithm is different from the standard SW-UCRL algorithm \cite{gajane2018sliding}  because we, motivated by the threshold policy structure in Section~\ref{section:single-sensor-single-process_stationary}, introduce a thresholding step on the age. The threshold depends on the available energy level, the current channel condition, and the energy generation history.  
\item In Section~\ref{section:Age Aware BORL}, we relax the assumption on the availability of the variation budgets, and motivated by the BORL algorithm  \cite{cheung2023nonstationary}, obtain AEC-BORL by incorporating the same threshold comparison step as in Section~\ref{section:Age Aware Sliding Window UCRL2}. 
\item In Section~\ref{section: WIT Sliding Window UCRL2}, we formulate a non-stationary RL AoI minimization problem for multiple EH source system with mobility-driven time-varying energy harvesting characteristics and channel statistics, and propose a novel algorithm called WIT-SW-UCRL2 for known variation budgets. Motivated by the Whittle's index and threshold based source scheduling and sampling policy for a stationary multi-source system, our algorithm selects the source with the highest Whittle's index for channel probing. Sampling decision on the probed source is taken if the age of that source exceeds a threshold. 
\item In Section~\ref{section: WIT BORL}, we propose a WIT-BORL algorithm that does not require the assumption about the availability of variation budgets, while retaining the Whittle's index and threshold comparison steps as described in Section~\ref{section:Age Aware Sliding Window UCRL2}. The WIT-BORL algorithm tunes window size for different blocks of the entire time horizon and calls a WIT-SWUCRL2 algorithm as a subroutine with a selected window size to choose the action for that particular block. 

%\item In Section~\ref{section: numerical_results}, numerical exploration shows that the proposed algorithms achieve significantly lower AoI than competing algorithms, and AEC-BORL and WIT-BORL are the best-performing algorithms for a single source and multiple source cases respectively.
\end{enumerate}

\subsection{ Organization }

The subsequent sections of the paper are organized as follows. Section~\ref{section:system-model} describes the system model. Section~\ref{section:single source non-stationary-pf} focuses on AoI minimization for a single source working in a non-stationary environment, whereas Section~\ref{section: multi-source-nonstat} explores AoI minimization for multiple sources in a non-stationary environment under  source scheduling constraint. Lastly, simulation results are presented in Section~\ref{section: numerical_results}, followed by the conclusions  in Section~\ref{section:Conclusion}. All proofs are included in the appendices.

{\bf Notation:} In this paper, $\mathbb{P}(\cdot)$,
 $\mathbb{E}(\cdot)$,  and $\mathbf{1}$ represent the probability operator, expectation operator, and indicator term, respectively. 
\section{System model}\label{section:system-model}

We consider a wireless network system consisting of $N$ mobile source nodes, a sink node, and a central scheduler/decision maker (DM) as in Figure~\ref{multis-system-model-ns}.  At each discrete time instant $t$, the scheduler selects one of the sources $i(t) \in \{1,2,\cdots, N\}$ for probing the state of its  channel to the sink. Next,  the selected source $i(t)$ further decides on sampling and transmission of the status update to the sink node. Each source~$i$ harvests energy from the environment and stores it in its finite buffer of size $B_i$ units. At time~$t$, let us denote by  $A_{i}(t)$ and $E_{i}(t) \in \{0,1,\cdots, B_{i}\}$ the number of energy packet arrivals, and the available energy, respectively for the $i$-source node. Here, $\{A_{i}(t)\}_{t \geq 0}$ is considered to be an i.i.d.  Bernoulli process with unknown mean $\lambda_{i,t}>0$; i.e., $\mathbb{E}(A_{i}(t))=\lambda_{i,t}$. We assume that the resource spent in channel probing is negligible whereas sampling decisions cost the $i$-th source $E_s$ units of energy.  To address the unreliable nature of wireless networks, we consider a fading channel between the $i$-th source and the sink node. At time $t$,     the channel state between the $i$-th source node and the sink node is denoted by $C(i, t)$. We assume that $C(i,\cdot)  \in \{C_1, C_2,\cdots, C_m\}$, where $m$ is a finite number representing the total number of possible channel states. 

\begin{figure}[t]
  \begin{center}
 \includegraphics[height=3.8cm,width=7cm]{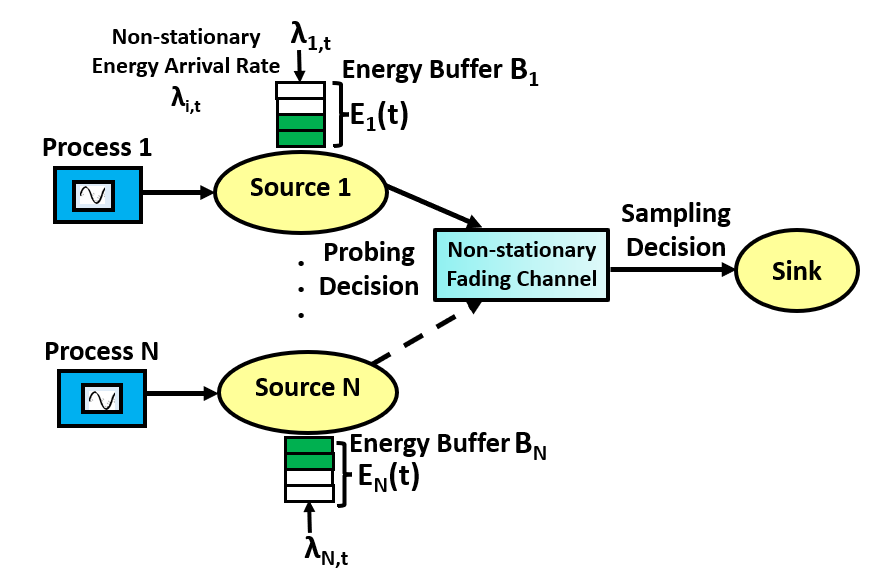}
 \caption{Remote sensing system with multiple EH sources in non-stationary environment.}
 \label{multis-system-model-ns}
 \end{center}
 \vspace{-5mm}
\end{figure}

Additionally, $p(i,t)$ represents the probability that the packet is successfully transmitted from the $i$-th source to the sink node at time $t$, where $p(i,t) \in \{p_1, p_2, \cdots, p_m\}, \forall i \in \{1,2,\cdots,N\}$.  The term $r_{i}(t)$ is a binary indicator of successful packet transmission from the $i$-th source to the sink at the time $t$.
Hence,  $\mathbb{P}(r_{i}(t)=1|C(i, t)=C_j)=p_j$ and similarly, $\mathbb{P}(p(i,t)=p_j|C(i, t)=C_j)=1$ for all $j \in \{1,2,\cdots,m\}$ and $i \in \{1,2,\cdots, N\}$. 
Here, we assume that the channel states across distinct nodes are independent and that $\{C(i, t)\}_{t \geq 0}$ is i.i.d. across $t$ for each $1 \leq i \leq N$. Let us  define $q_{j,i,t}$ as the probability that the channel state of source $i$ to the sink node is $C_j$ at time $t$, i.e.  $q_{j,i,t} \doteq \mathbb{P}(C(i, t)=C_j)$, where $1 \leq j \leq m$.  Each source node~$i$ is assumed to be mobile, leading to time-varying energy harvesting rates $\lambda_{i,t}$ and channel state probabilities $q_{j,i,t}$, and consequently  non-stationarity in both the energy arrival process  and the channel state distribution.

It is assumed that only one source-sink channel out of $N$ is probed to determine its state $C(i,t)$ at each time~$t$.  
 The indicator of probing the channel state of source $i$  is denoted by $b_{i}(t) \in \{0,1\}$. To prevent pilot  collision, only one source is selected to estimate the channel, and hence $\sum_{i=1}^N  b_{i}(t) \leq 1 ,   \forall t \geq 1$. 
 Furthermore, for   source $i$ such that $b_i(t)=1$, we define $a_{i}(t) \in \{0,1\}$ as the indicator that represents the decision to either sample and transmit the data packet from the $i$-th source to the sink node ($a_{i}(t)=1$), or to remain inactive ($a_{i}(t)=0$). Therefore,   if $b_{i}(t)=0$, then $a_{i}(t)=0$. Conversely, if $b_i(t)=1$, then $a_i(t) \in \{0,1\}$. 
A generic action for the $i$-th source at time $t$ is represented as $(b_{i}(t),a_{i}(t))$, and the action space of source $i$ is therefore given by $\mathcal{A}_{i}=\{(0,0), (1,0), (1,1)\}$. It is important to note that $a_i(t)=1$ is possible only if $E_i(t) \geq E_s$.

Let  $\kappa_{i}(t)\doteq \sup\{0 \leq \kappa < t: a_{i}(\kappa)=1, r_{i}(\kappa)=1\}$ be defined as the last time instant before time~$t$ when the $i$-th source had successfully sent a status update to the sink node. The AoI for the $i$-th source at time~$t$ is  denoted by $K_i(t)=(t-\kappa_i(t))$. However, when $a_{i}(t)=1$ and $r_{i}(t)=1$, $K_i(t)=0$ since the sink node has access to the most recent status update of the $i$-th source.

Note that, for the single source case ($N=1$), we omit the node index $i$ from the above-defined variables (for example $A_i(t)= A(t), E_i(t)=E(t), C(i,t)= C(t), p(i,t)=p(t), \lambda_{i,t}= \lambda_{t}, q_{j,i,t}= q_{j,t}$ and so on). Since there is no cost of channel probing, $b(t)=1, \forall t \geq 1$, and hence the action space in this case reduces to  $\mathcal{A}=  \{0,1\}$ containing the sampling decision indicator only.

\section{Single source $(N=1)$}\label{section:single source non-stationary-pf}
In this section, we first develop a threshold decision rule for the stationary setting, which is later used in the sampling and communication algorithm design for the non-stationary setting.

\subsection{MDP Formulation for Stationary  Setting } \label{section:single-sensor-single-process_stationary}
In this section, we assume that $\lambda_t=\lambda$ and $q_{j,t}=q_j$ for all $j \in \{1,2,\cdots,m\}$ and for all $t \geq 0$. We formulate the problem of AoI minimization as an  MDP with state space $\mathcal{S} \doteq \{0,1,\cdots, B\} \times \mathbb{Z}_+  \times \{C_{1}, C_{2},\cdots, C_{m}\}$; a generic state $s = (E, K, C)$ denotes that  $E$ energy packets are available in the energy buffer, the last successfully received packet was generated at the source node $K$ slots earlier, and the current channel state is  $C$.  The action space is $\mathcal{A}=  \{0,1\}$, and the action at time~$t$ is $a(t)$ as discussed in Section~\ref{section:system-model}.  At any time, if the source node decides not to sample the process,  then the single-stage AoI cost is $c(s, a=0) = K$, and if it decides to sample, then the expected single-stage AoI cost is  $c(s,a=1) = K(1-p(C))$ which accounts for the random packet loss with probability $p(C)$. The time-averaged AoI minimization problem can be solved by solving an $\alpha$-discounted cost stationary MDP problem \cite[Section 4.1]{bertsekas2011dynamic} and then taking $\alpha \uparrow 1$. The $\alpha$-discounted cost problem is given by:
\begin{eqnarray}\label{eqn:discounted-cost-problem}
\min_{\mu} \lim_{M \rightarrow \infty} \sum_{t=0}^{M-1} \alpha^t \mathbb{E}_{\mu}(c(s(t), a(t))) 
\end{eqnarray}

where $\mu$ is a generic stationary deterministic policy. Denoting by $J^*(\cdot)$ the optimal value function, we write its  Bellman equation as follows:
\begin{eqnarray} \label{eqn:Bellman-eqn-SSSP_Probingt}
&&J^{*}(E\geq E_{s},K,C)\nonumber\\
&=&min\{K+\alpha  \mathbb{E}_{A,C^{'}}J^{*}(min\{E+A,B\},K+1, C^{'}),\nonumber\\ 
&&K(1-p(C))+\alpha p(C)\mathbb{E}_{A,C^{'}}J^{*}(min\{E-E_{s}+A,B\},\nonumber\\
&&1,C^{'})+\alpha (1-p(C))\mathbb{E}_{A,C^{'}}J^{*}(min\{E-E_{s}+A,B\},\nonumber\\
&&K+1,C^{'})\}\nonumber\\
&&J^{*}(E<E_{s},K, C)\nonumber\\
&=&K+\alpha \mathbb{E}_{A,C^{'}}J^{*}(min\{E+A,B\},K+1,C^{'}) 
\end{eqnarray}
\normalsize

Here $\mathbb{E}(A)=\lambda$, and $C'$ is viewed as the random next channel state with $\mathbb{P}(C'=C_j)=q_j, 1 \leq j \leq m$.  
The first term in the minimization operator of the first equation in \eqref{eqn:Bellman-eqn-SSSP_Probingt} is the cost of not sampling a process ($a(t)=0$), which includes single-stage AoI cost $K$  and an $\alpha$ discounted future cost which is averaged over the distributions of the number of energy packet generation $A$ and the next channel state $C^{'}$. The second term in the minimization operator of the first equation in \eqref{eqn:Bellman-eqn-SSSP_Probingt} is the cost of sampling a process ($a(t)=1$)  which includes the expected AoI cost $K(1-p(C))$ (expectation is taken over the packet success probability $p(C)$), and the next random state is given by $(min\{E-E_s+A,B\},1,C^{'} )$ and $(min\{E-E_s+A,B\},K+1,C^{'})$ if $r(t)=1$ and $r(t)=0$, respectively. The second equation in \eqref{eqn:Bellman-eqn-SSSP_Probingt} follows similarly since when $E<E_s$, the only possible action is to choose $a(t)=0$.

\subsubsection{Policy structure}

\begin{lemma} \label{lemma:SSSP-J-increasing-in-K}
 The value function 
$J^{*}(E,K,C )$  increases in $K$.  %\textcolor{red}{write mathematically}????????????????????
\end{lemma}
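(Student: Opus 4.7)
The natural approach is a value-iteration induction: fix any initial $J_0 \equiv 0$ and iterate the Bellman operator $T$ from \eqref{eqn:Bellman-eqn-SSSP_Probingt} to obtain $J_{n+1} = T J_n$. Standard results for non-negative-cost discounted MDPs (e.g. Bertsekas, Dynamic Programming and Optimal Control, Vol.~II) guarantee that $J_n \uparrow J^*$ pointwise, so it suffices to prove the monotonicity $K_1 \leq K_2 \Longrightarrow J_n(E, K_1, C) \leq J_n(E, K_2, C)$ for every $n$, every $E \in \{0, 1, \ldots, B\}$, and every $C \in \{C_1, \ldots, C_m\}$, and then take the limit $n \to \infty$.

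The base case $n=0$ is immediate. For the induction, fix $E$, $C$, and $K_1 \leq K_2$, and analyze the two terms inside the $\min$ in \eqref{eqn:Bellman-eqn-SSSP_Probingt}. For the ``do not sample'' action, the quantity $K + \alpha \mathbb{E}_{A, C'} J_n(\min\{E+A, B\}, K+1, C')$ is non-decreasing in $K$ by the inductive hypothesis (applied pointwise under the expectation, since $A$ and $C'$ are independent of $K$) and because the leading $K$ is itself non-decreasing. For the ``sample'' action, the first two summands $K(1 - p(C)) + \alpha p(C)\,\mathbb{E}_{A, C'} J_n(\min\{E - E_s + A, B\}, 1, C')$ are non-decreasing in $K$ because $p(C) \in [0, 1]$ and because the second summand does not depend on $K$; the third summand $\alpha (1 - p(C))\,\mathbb{E}_{A, C'} J_n(\min\{E - E_s + A, B\}, K+1, C')$ is non-decreasing in $K$ by the inductive hypothesis. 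Hence both Q-values are non-decreasing in $K$, and the pointwise minimum of two non-decreasing functions is non-decreasing, which handles the $E \geq E_s$ branch. The $E < E_s$ branch is just the ``do not sample'' term and is handled identically.

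Passing to the limit, pointwise monotone convergence of $J_n$ preserves the inequality, giving $J^*(E, K_1, C) \leq J^*(E, K_2, C)$ as required. The only mildly delicate point is the convergence $J_n \uparrow J^*$, since single-stage costs are unbounded in $K$; however, because costs are non-negative and the discount factor $\alpha < 1$, the monotone-convergence version of value iteration applies to the extended-real-valued value function on $\mathcal{S}$, and $J^*$ inherits the monotonicity of the approximants. No further work on the energy or channel components is needed, since the induction step treats $E$ and $C$ as fixed parameters.
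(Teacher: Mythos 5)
Your proof is correct and follows essentially the same route as the paper's: value iteration from $J^{(0)}\equiv 0$, an induction showing each of the two Q-values inside the $\min$ is non-decreasing in $K$, and passage to the limit via monotone convergence of the iterates to $J^*$. Your additional remark about the cost being unbounded in $K$ is a welcome extra care, but the argument is otherwise the same.
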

\begin{proof} See Appendix ~\ref{appendix:proof-of-lemma-SSSP-J-increasing-in-K}.
\end{proof}
\begin{theorem}\label{theorem:SSSP-policy-p}
 At any time $t$, the optimal policy exhibits a threshold structure on $p(C)$. For any $E \geq E_{s} $, it is optimal to sample the source node if  $p(C)\geq p_{th}(E,K)$ for a suitable threshold function $p_{th}(E,K)$ of $E$ and $K$.
\end{theorem}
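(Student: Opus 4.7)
The plan is to work directly with the $\alpha$-discounted Bellman equation \eqref{eqn:Bellman-eqn-SSSP_Probingt} for $E \geq E_s$, write $J^{*}(E,K,C) = \min\{Q_0(E,K,C), Q_1(E,K,C)\}$, where $Q_0$ and $Q_1$ are respectively the cost-to-go under actions $a=0$ and $a=1$, and then study the sign of $\Delta(E,K,C) \doteq Q_0(E,K,C) - Q_1(E,K,C)$ as a function of $p(C)$. Sampling is optimal iff $\Delta(E,K,C) \geq 0$, so a threshold structure in $p(C)$ will follow once $\Delta$ is shown to be monotone non-decreasing in $p(C)$.

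First I would introduce the shorthand $f_{0}(E,K) \doteq \mathbb{E}_{A,C'} J^{*}(\min\{E+A,B\}, K+1, C')$, $f_{1}(E,K) \doteq \mathbb{E}_{A,C'} J^{*}(\min\{E-E_{s}+A,B\}, K+1, C')$, and $f_{2}(E) \doteq \mathbb{E}_{A,C'} J^{*}(\min\{E-E_{s}+A,B\}, 1, C')$. Note that these three quantities depend on $(E,K)$ only, having been averaged over the next channel state $C'$. Substituting into \eqref{eqn:Bellman-eqn-SSSP_Probingt} and collecting terms in $p(C)$, a short algebraic rearrangement gives
\begin{equation*}
\Delta(E,K,C) = p(C)\bigl[K + \alpha\bigl(f_{1}(E,K) - f_{2}(E)\bigr)\bigr] + \alpha\bigl(f_{0}(E,K) - f_{1}(E,K)\bigr).
\end{equation*}
Crucially, the dependence on $C$ enters only through $p(C)$, so for fixed $(E,K)$ the function $\Delta$ is affine in $p(C)$.

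Next I would use Lemma~\ref{lemma:SSSP-J-increasing-in-K} to control the slope: since $J^{*}(\cdot, \cdot, \cdot)$ is non-decreasing in its second argument, taking expectations preserves the inequality and yields $f_{1}(E,K) \geq f_{2}(E)$ (because $K+1 \geq 1$). Combined with $K \geq 0$ and $\alpha \in (0,1)$, this makes the bracketed slope non-negative, so $\Delta(E,K,C)$ is non-decreasing in $p(C)$. Therefore for every $(E,K)$ with $E \geq E_{s}$ there exists a threshold
\begin{equation*}
p_{th}(E,K) \doteq \inf\bigl\{p \in [0,1] \,:\, \Delta(E,K,C) \geq 0 \text{ when } p(C)=p\bigr\}
\end{equation*}
(with the usual convention that the infimum over the empty set is $+\infty$, i.e.\ never sampling), and it is optimal to sample iff $p(C) \geq p_{th}(E,K)$. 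Passing to the average-cost limit $\alpha \uparrow 1$ preserves this threshold structure by the standard vanishing-discount argument \cite[Section 4.1]{bertsekas2011dynamic}.

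The main obstacle I anticipate is the slope-sign step: without Lemma~\ref{lemma:SSSP-J-increasing-in-K} one cannot rule out the possibility that $f_{1} < f_{2}$, which could make $\Delta$ decreasing in $p(C)$ for small $K$ and destroy the threshold structure. Everything else is bookkeeping, provided one is careful that $\Delta$ depends on $C$ only through $p(C)$, since this is what lets the threshold $p_{th}$ be independent of the channel label and phrased purely as a condition on the success probability.
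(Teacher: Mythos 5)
Your proposal is correct and follows essentially the same route as the paper's proof: both rearrange the Bellman equation for $E \geq E_s$ into a comparison that is affine in $p(C)$, and both invoke Lemma~\ref{lemma:SSSP-J-increasing-in-K} to show the coefficient of $p(C)$ has the right sign (via $f_1(E,K) \geq f_2(E)$ and $K \geq 0$), yielding the threshold. Your write-up is merely a more explicit packaging of the same argument, with the affine decomposition of $\Delta$ and the observation that $C$ enters only through $p(C)$ stated cleanly.
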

\begin{proof} See Appendix ~\ref{appendix:proof-of-theorem-SSSP-policy-p}.
\end{proof}
Now we make a conjecture to prove that the optimal policy also involves checking whether $K$ exceeds a threshold.
\begin{conjecture}\label{conjecture:SSSP-with-fading-policy-structure}
  The  difference   $J^{*}(min\{E+A,B\},K,C^' )-J^{*}(min\{E-E_{s}+A,B\},K,C^')$ is increasing in $K$. 
\end{conjecture}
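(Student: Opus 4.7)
The natural plan is value iteration combined with induction on the iteration index. Define $J_0 \equiv 0$ and, for $n \geq 0$, set $J_{n+1} = TJ_n$, where $T$ is the Bellman operator implicit in \eqref{eqn:Bellman-eqn-SSSP_Probingt}. Since $\alpha \in (0,1)$, $T$ is a contraction and $J_n \to J^{*}$ pointwise, so if the property in the conjecture is preserved by $T$ (and hence inherited at every $J_n$), it passes to the limit $J^{*}$. The base case $J_0 \equiv 0$ is immediate because the difference in the conjecture vanishes for every $K$. Writing $E_H \doteq \min\{E+A, B\}$ and $E_L \doteq \min\{E-E_s+A, B\}$, the inductive goal at stage $n+1$ is to show that for each realisation of $A$ and $C'$ and any $K' \geq K$,
\begin{equation*}
J_{n+1}(E_H, K', C') - J_{n+1}(E_L, K', C') \geq J_{n+1}(E_H, K, C') - J_{n+1}(E_L, K, C').
\end{equation*}

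For the inductive step I would substitute the Bellman recursion for each of the four terms above. When $E_L < E_s$ only the no-sample action is admissible at $(E_L,\cdot,C')$, whereas at $(E_H,\cdot,C')$ both actions may be available; when $E_L \geq E_s$ both states admit both actions. In either situation one splits into cases according to which action is optimal at each of the four points $(E_H,K,C')$, $(E_H,K',C')$, $(E_L,K,C')$, $(E_L,K',C')$. The easy cases are ``no-sample throughout'' (where the target inequality follows directly from the inductive hypothesis applied at age $K+1$ versus $K'+1$) and ``sample throughout'' (where the $\alpha p(C)$-weighted terms involving $J_n(\cdot,1,\cdot)$ cancel between the two sides, and the remaining $\alpha(1-p(C))$-weighted differences again reduce to the inductive hypothesis). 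Lemma~\ref{lemma:SSSP-J-increasing-in-K} and the single-stage cost structure jointly handle the sign of the residual age-gap terms.

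The main obstacle is the mixed cases in which the optimal action at $(E_H,\cdot,C')$ disagrees with that at $(E_L,\cdot,C')$, or in which the optimal action changes as $K$ is increased to $K'$. In such cases the $\min$ in the Bellman operator cannot be resolved identically on the two sides, so one must replace $J_{n+1}$ at one state by its Bellman value at a sub-optimal action to obtain a one-sided inequality and then telescope. I expect this will require carrying an auxiliary induction hypothesis alongside the conjectured monotonicity---plausibly a joint supermodularity of $J_n$ in $(E,K)$ across arbitrary (not only $E_s$-sized) energy gaps, or a concavity-type property of $J_n(\cdot,K,C)$ in $E$---so that the sub-optimal-action substitutions can be controlled uniformly.

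If a clean inductive closure proves elusive, a coupling argument is a promising alternative: couple two copies of the process driven by identical energy arrivals, channel states, and packet-loss outcomes but with initial energies differing by $E_s$, and express $J^{*}(E_L,K,C') - J^{*}(E_H,K,C')$ as a discounted expected sum over future times at which the lower-energy chain is forced to forgo a sample because of the energy deficit. The monotonicity conjecture then reduces to showing that this discounted expected penalty shrinks as $K$ grows, because at larger $K$ the threshold policy of Theorem~\ref{theorem:SSSP-policy-p} triggers a successful transmission---and hence resynchronises the two chains---sooner. Formalising either the case analysis or the coupling estimate rigorously is the essential technical hurdle.
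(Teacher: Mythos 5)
First, note that the paper itself does not prove this statement: it is posed explicitly as a conjecture, supported only by the numerical evidence in Figure~\ref{fig_Thres}(a), and the authors list proving Theorem~\ref{theorem:SSSP-policy-K} without Conjecture~\ref{conjecture:SSSP-with-fading-policy-structure} as future work. So there is no paper proof to compare against, and the relevant question is whether your proposal actually closes the gap. It does not. What you have written is a proof plan whose decisive step --- the inductive closure in the ``mixed'' cases where the optimal action differs between $(E_H,\cdot,C')$ and $(E_L,\cdot,C')$, or switches as $K$ increases to $K'$ --- is identified but not carried out. This is precisely the step where such supermodularity arguments typically break: substituting a sub-optimal action to resolve the $\min$ gives a one-sided bound on only one of the four terms, and the resulting inequalities do not telescope into the target inequality without additional structure. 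The auxiliary hypotheses you propose to carry alongside the induction (supermodularity of $J_n$ in $(E,K)$ across arbitrary energy gaps, or concavity of $J_n(\cdot,K,C)$ in $E$) are themselves unproven, would need their own inductive preservation arguments, and it is not evident that the Bellman operator in \eqref{eqn:Bellman-eqn-SSSP_Probingt} preserves them --- the energy truncation at $B$ and the reset of age to $1$ upon successful transmission are exactly the features that tend to destroy concavity/supermodularity in $E$.

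The coupling alternative has the same status: the representation of $J^{*}(E_L,K,C')-J^{*}(E_H,K,C')$ as a discounted penalty for forgone samples is plausible, but the claim that this penalty shrinks in $K$ rests on the assertion that a larger initial age triggers a resynchronising successful transmission ``sooner.'' Making that monotone-in-$K$ statement rigorous requires knowing how the optimal policy varies with $K$, which is close to what Theorem~\ref{theorem:SSSP-policy-K} (and hence the conjecture itself) is meant to establish; Theorem~\ref{theorem:SSSP-policy-p} only gives a threshold in $p(C)$ whose level $p_{th}(E,K)$ has unknown dependence on $K$ at this point. In short, your sketch correctly locates the difficulty and proposes sensible directions, but neither route is completed, so the statement remains unproven --- consistent with its status in the paper as a numerically validated conjecture.
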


\begin{theorem}\label{theorem:SSSP-policy-K}  Under Conjecture~\ref{conjecture:SSSP-with-fading-policy-structure}, the optimal  policy for \eqref{eqn:discounted-cost-problem} is a threshold policy on $K$. For any $E \geq E_{s} $, it is optimal to sample the source if and only if  $K\geq K_{th}(E, C)$ for a suitable threshold function $K_{th}(E,C)$. 
\end{theorem}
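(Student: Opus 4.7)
The plan is to show that the difference between the ``do not sample'' $Q$-value and the ``sample'' $Q$-value at state $(E,K,C)$ with $E \geq E_s$ is monotone non-decreasing in $K$, which immediately yields the desired threshold structure: once sampling becomes preferable at some age, it stays preferable for all larger ages.

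First I would fix $E \geq E_s$ and $C$ and, directly from \eqref{eqn:Bellman-eqn-SSSP_Probingt}, write
\begin{align*}
\Delta(E,K,C) &\doteq \bigl[K + \alpha\, \mathbb{E}_{A,C'} J^{*}(\min\{E+A,B\},K+1,C')\bigr] \\
&\quad - \bigl[K(1-p(C)) + \alpha p(C)\, \mathbb{E}_{A,C'} J^{*}(\min\{E-E_s+A,B\},1,C') \\
&\qquad + \alpha(1-p(C))\, \mathbb{E}_{A,C'} J^{*}(\min\{E-E_s+A,B\},K+1,C')\bigr].
\end{align*}
Collecting terms, this simplifies to
\begin{align*}
\Delta(E,K,C) &= K\,p(C) + \alpha\, \mathbb{E}_{A,C'}\Bigl[J^{*}(\min\{E+A,B\},K+1,C') - J^{*}(\min\{E-E_s+A,B\},K+1,C')\Bigr] \\
&\quad + \alpha p(C)\, \mathbb{E}_{A,C'}\Bigl[J^{*}(\min\{E-E_s+A,B\},K+1,C') - J^{*}(\min\{E-E_s+A,B\},1,C')\Bigr].
\end{align*}
Sampling is (weakly) optimal precisely when $\Delta(E,K,C) \geq 0$, so it suffices to show $K \mapsto \Delta(E,K,C)$ is non-decreasing.

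Next I would show each of the three summands is non-decreasing in $K$. The first term $K\,p(C)$ is linear in $K$ with slope $p(C) \geq 0$. The second term is non-decreasing in $K$ by Conjecture~\ref{conjecture:SSSP-with-fading-policy-structure} applied argumentwise (since replacing $K$ by $K+1$ only shifts the index by one and preserves monotonicity, and the expectation over $A$ and $C'$ preserves the inequality). The third term is non-decreasing in $K$ by Lemma~\ref{lemma:SSSP-J-increasing-in-K}, since $J^{*}(\min\{E-E_s+A,B\},K+1,C')$ is non-decreasing in $K$ while the subtracted quantity $J^{*}(\min\{E-E_s+A,B\},1,C')$ does not depend on $K$. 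Combining these three observations, $\Delta(E,K,C)$ is non-decreasing in $K$, so defining $K_{th}(E,C) \doteq \min\{K \geq 1 : \Delta(E,K,C) \geq 0\}$ (with the convention $K_{th}(E,C) = \infty$ if the set is empty) gives the claimed threshold: sampling is optimal iff $K \geq K_{th}(E,C)$. One would finally note that the result transfers from the $\alpha$-discounted problem to the average-cost problem via the standard vanishing discount argument as $\alpha \uparrow 1$.

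The main obstacle is that the ``sampling'' branch treats the two post-decision ages $1$ and $K+1$ differently, so one cannot directly invoke the conjecture on a single pair of $J^*$ values; the key manoeuvre is the algebraic regrouping above, which isolates exactly the increment covered by Conjecture~\ref{conjecture:SSSP-with-fading-policy-structure} and exactly the increment covered by Lemma~\ref{lemma:SSSP-J-increasing-in-K}. Beyond this rearrangement, the remaining steps are routine monotonicity bookkeeping and an application of the standard fact that the expectation of monotone functions is monotone.
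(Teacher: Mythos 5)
Your proposal is correct and follows essentially the same route as the paper's proof: both reduce the sampling decision to a comparison whose $K$-dependence is controlled by splitting the relevant expression into the difference $J^{*}(\min\{E+A,B\},K+1,C')-J^{*}(\min\{E-E_s+A,B\},K+1,C')$ handled by Conjecture~\ref{conjecture:SSSP-with-fading-policy-structure}, the difference $J^{*}(\min\{E-E_s+A,B\},K+1,C')-J^{*}(\min\{E-E_s+A,B\},1,C')$ handled by Lemma~\ref{lemma:SSSP-J-increasing-in-K}, and the linear term $Kp(C)$. Your single-expression formulation of $\Delta(E,K,C)$ is just an algebraic regrouping of the paper's two-sided inequality, and your monotonicity bookkeeping is if anything slightly more explicit than the paper's.
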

\begin{proof} See Appendix ~\ref{appendix:proof-of-theorem-SSSP-policy-K}.
\end{proof}

Conjecture \ref{conjecture:SSSP-with-fading-policy-structure} has been validated numerically later in Section \ref{subsection:numerical result Stationary system model} where Figure \ref{fig_Thres}(a) shows a threshold policy structure on age for sampling a source. 

\subsection{AEC-SW-UCRL2 algorithm for non-stationary environment}\label{section:Age Aware Sliding Window UCRL2} 
Motivated by the threshold policy structure for stationary MDP, we propose an algorithm called AEC-SW-UCRL2 for a non-stationary setting where $\lambda_t$ and $\{q_{j,t}\}_{1 \leq j \leq m}$ vary with $t$. The corresponding MDP is described by the tuple $(\mathcal{S}, \mathcal{A}, T, c, w)$, where  state space $\mathcal{S} $, action space $\mathcal{A}= \{0,1\}$ and  single stage cost $c(\cdot, \cdot )$ are as in Section~\ref{section:single-sensor-single-process_stationary}.   The number $T$ denotes the (finite) time horizon length.  The collection of state transition probabilities for all time instants is denoted by $w = \{w_t\}_{t=1}^T$, where  $w_t(\cdot|s, a)$  is the probability distribution of the next state given that the state-action pair at time~$t$ is $(s, a)$. 

We also define the following  variation budgets $V_\lambda$ and $V_{q}$:
\begin{align}\label{eqn:main-problem1}
V_\lambda \doteq  \sum_{t=1}^{T-1} V_{\lambda,t},  \hspace{1cm}   V_{q} \doteq \sum_{t=1}^{T-1} V_{q, t} 
\end{align}
where $V_{\lambda,t} \doteq  |\lambda_{t+1} - \lambda_t|$ and
$V_{q, t} \doteq max_{1 \leq j \leq m} |q_{j,t+1} - q_{j,t}|$. Here $V_{\lambda,t}$ and $V_{q,t}$ are unknown to the learner. However, in this section, we assume that $V_\lambda$ and $V_q$ are known and are used for selecting sliding window size $W$, though   this assumption will be relaxed in the next section.

Let us also define $V_{w,t} \doteq max_{s \in \mathcal{S}, a \in \mathcal{A}}\lVert {w}_{t+1}(\cdot|s,a)- {w_t}(\cdot|s,a)\rVert_{1}$ and variation budget on state transition probabilities as  $V_{w}\doteq\sum_{t=0}^{T-2} V_{w, t}$. The following lemma relates $V_w$ with $V_\lambda$ and $V_q$. 
\begin{lemma} \label{lemma:var_boudget-relation}
For each $t$,  we have 
$V_{w,t}\leq V_{\lambda,t+1}+V_{q,t+1}$, and consequently  $V_{w}\leq V_{\lambda}+V_{q}$.
\end{lemma}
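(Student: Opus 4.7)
The approach is to exploit the fact that the one-step transition kernel $w_{t}(\cdot\mid s,a)$ factors into three independent marginals, only two of which depend on $t$. Given the current state $s=(E,K,C)$ and action $a\in\{0,1\}$, the next state $s'=(E',K',C')$ is generated by three mutually independent random quantities: the next energy arrival $A\sim\mathrm{Bern}(\lambda_{t+1})$, which sets $E'=\min\{E-E_{s}a+A,B\}$; the transmission outcome $r$, whose distribution is a function of $(K,a,C)$ only (through the channel success probability $p(C)$) and hence independent of $t$; and the next channel observation $C'$, categorically distributed with parameters $(q_{1,t+1},\ldots,q_{m,t+1})$. Consequently I would first write
\[
w_{t}(s'\mid s,a)=f^{E}_{t}(E'\mid E,a)\,f^{K}(K'\mid K,a,C)\,f^{C}_{t}(C'),
\]
where only $f^{E}_{t}$ and $f^{C}_{t}$ carry $t$-dependence.

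Next, I would apply the standard product-distribution inequality $\|\mu_{1}\otimes\mu_{2}\otimes\mu_{3}-\nu_{1}\otimes\nu_{2}\otimes\nu_{3}\|_{1}\leq\sum_{i=1}^{3}\|\mu_{i}-\nu_{i}\|_{1}$ (obtained by inserting telescoping intermediate products and using that each marginal integrates to one). Applied to $w_{t+1}$ and $w_{t}$, the middle term vanishes because $f^{K}$ does not depend on $t$, leaving
\[
\|w_{t+1}(\cdot\mid s,a)-w_{t}(\cdot\mid s,a)\|_{1}\leq\|f^{E}_{t+1}(\cdot\mid E,a)-f^{E}_{t}(\cdot\mid E,a)\|_{1}+\|f^{C}_{t+1}-f^{C}_{t}\|_{1}.
\]
The energy marginal is the pushforward of a Bernoulli law under a deterministic map and is therefore non-expansive in total variation, so its contribution is controlled by $|\lambda_{t+2}-\lambda_{t+1}|=V_{\lambda,t+1}$. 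The channel marginal is the discrepancy between two categorical distributions, which is dominated by $\max_{j}|q_{j,t+2}-q_{j,t+1}|=V_{q,t+1}$ under the paper's normalization. Taking a maximum over $(s,a)\in\mathcal{S}\times\mathcal{A}$ delivers the per-step bound $V_{w,t}\leq V_{\lambda,t+1}+V_{q,t+1}$, and summing across $t\in\{0,\ldots,T-2\}$ produces the cumulative statement $V_{w}\leq V_{\lambda}+V_{q}$.

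The main obstacle is the factorization in the first step: I would need to justify carefully that the randomness driving $A(t+1)$, the packet reception outcome $r(t)$, and the next channel observation $C(t+1)$ is conditionally independent given $(s,a)$, so that the joint transition kernel genuinely decomposes as a product over $(E',K',C')$. This uses the i.i.d.\ assumptions on the energy arrival process and channel process together with the independence of the packet loss coin from both. Once the independence is in place, the remaining arguments are routine total-variation bookkeeping; the only subtle point is reconciling the $\sum_{j}$ appearing naturally in the channel marginal with the $\max_{j}$ appearing in $V_{q,t}$, which is absorbed by the paper's normalization convention (or, alternatively, by noting that such a loose bound is already sufficient for the ensuing regret analysis).
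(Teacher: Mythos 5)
Your proposal is correct and follows essentially the same route as the paper's proof: the paper likewise decomposes each entry of the transition kernel into its $q_{j,t+1}$ factor, its $\lambda_{t+1}$ (Bernoulli) factor, and its $t$-independent $p_j$ factor --- via an explicit eight-case enumeration rather than your marginal factorization --- and then applies the scalar product rule $|\delta(xy)|\leq|\delta x|\,y+x\,|\delta y|$, which is precisely your telescoping product-measure inequality written entrywise. The $\sum_{j}$-versus-$\max_{j}$ (and Bernoulli factor-of-two) looseness you flag at the end is genuinely present if one insists on the $\ell_{1}$ norm in the definition of $V_{w,t}$, but the paper's own proof has the identical gap --- it bounds individual entries of $w_{t+1}(\cdot\mid s,a)-w_{t}(\cdot\mid s,a)$ rather than their sum over next states --- so your argument is no weaker than the original, and your explicit acknowledgment of the issue is, if anything, more honest.
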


\begin{proof}
See Appendix ~\ref{appendix:proof-of-var_boudget-relation}.
\end{proof}

\vspace{-5pt}
{\footnote{Alternatively, the goal is to  minimize   the dynamic regret \cite{fei2020dynamic};
$\min_{\pi} \underbrace{\sum_{t=1}^T ( \mathbb{E}_{\pi}[c(s(t), a(t))]-\zeta^*(t) )}_{\doteq R_T(\pi)},$ where $\zeta^*(t)$ is the optimal long-term average cost of a stationary MDP with energy arrival rate and the channel state distribution being fixed at $\lambda_t$ and  $\bm{q}_t\doteq [q_{1,t}, q_{2,t},\cdots, q_{m,t}]$ for all time instants. }} The goal of the source is to minimize the cumulative expected cost over all (not necessarily stationary or deterministic) policies $\pi$:
\begin{eqnarray}\label{eqn:main-problem}
\mathbb{E}_{\pi}[\sum_{t=1}^T c(s(t), a(t))]
\end{eqnarray}

\subsubsection{Algorithm description}
AEC-SW-UCRL2  given in Algorithm~\ref{alg:cap} uses a sliding window of size $W$ to estimate the energy arrival rate and state transition probabilities and find their confidence regions, by only time averaging at most $W$ number of recent samples. 
The algorithm divides the entire time horizon $T$ into a sequence of episodes. The start and end times of any episode $\rho$ are given by $\tau(\rho)$ and  $\tau(\rho+1)-1$,  with $\tau(1)=1$.

For any time step $t$ in episode $\rho$, we use the empirical mean estimator of $\lambda_t$ as follows: 
\begin{eqnarray} \label{eqn:lambda_estimate}
%\hat{\lambda}_{t}&=& \frac{\sum_{h=max\{(\tau(\rho)-W),1\}}^{t} A_h}{\sum_{h=max\{(\tau(\rho)-W),1\}}^{t} 1} 
\hat{\lambda}_{t}&=& \frac{e_t}{n_t} 
\end{eqnarray}
where  $e_t \doteq \sum_{h=max\{(\tau(\rho)-W),1\}}^{t} A(h)$ and $n_t \doteq t-max\{(\tau(\rho)-W),1\}+ 1$. 
Also, for each state-action pair $(s, a)$ and each time $t$ in episode $\rho$, we define the following counters:
\begin{eqnarray} \label{eqn:state-action-count}
N_{t}(s,a)&=& \sum_{h=max\{(\tau(\rho)-W),1\}}^{t-1} \mathbf{1}\{(s(h),a(h))=(s,a)\}\nonumber\\
N_{t}^+(s,a)&=& max\{N_{t}(s,a),1\}
\end{eqnarray}
The number of times we encounter  $(s(t), a(t))=(s, a)$ in episode $\rho$ up to time $t$   as:
\begin{eqnarray} \label{eqn:state-action-count-within-episode}
f_{\tau(\rho)}(s,a)&=& \sum_{h=\tau(\rho)}^{t-1} \mathbf{1}\{(s(h),a(h))=(s,a)\}
\end{eqnarray}

The DM ends the $\rho^{th}$ episode  either if the time index $t$ is a multiple of $W$, or if $f_{\tau(\rho)}(s,a)$ $\geq$  $N^{+}_{\tau(\rho)}(s,a)$. This   condition to end an episode is known as the doubling criterion, which makes sure that each episode is increasingly large enough to facilitate learning. 

We use  an empirical mean estimator for the transition probabilities as: 
\begin{eqnarray} \label{eqn:dist_estimate}
&&\hat{w}_{t}(s'|s,a)\nonumber\\
&=& \frac{1}{N_{t}(s,a)^{+}} \bigg(\sum_{h=max\{(\tau(\rho)-W),1\}}^{t-1} \mathbf{1}\{s(h)=s, a(h)=a, \nonumber\\ 
&&s(h+1)=s'\}\bigg) 
\end{eqnarray}
\normalsize

and its confidence region as: 
\begin{eqnarray} \label{eqn:confidence-region}
&&\mathcal{H}_{w,t}(s,a) \nonumber\\
&\doteq& \bigg\{{w}:\lVert {w}(\cdot|s,a)- \hat{w}_{t}(\cdot|s,a)\rVert_{1} \leq \sqrt{\frac{14S \log(2A\tau(\rho) / \delta)}{N_{\tau(\rho)}(s,a)^{+}}} \bigg\} \nonumber\\ 
\end{eqnarray}
\normalsize

with confidence parameter $\delta \in (0,1)$, $S= |\mathcal{S}|$, $A= |\mathcal{A}|$  and confidence radius $confr_{w,\tau(\rho)}(s,a)=\sqrt{\frac{14S \log(2A\tau(\rho) / \delta)}{N_{\tau(\rho)}(s,a)^{+}}}$.\\

In the AEC-SW-UCRL2 algorithm, the DM follows a stationary policy $\Bar{\pi}_{\tau(\rho)}$ throughout the episode $\rho$ and the optimal decision is computed based on the age, energy availability, and channel quality. At any time $t$ in episode $\rho$, the AEC-SW-UCRL2 algorithm first checks whether the age $K(t)$  is greater than a threshold   $Thres(t)$ (which is a function of energy and current channel success probability); if it is true,   then  $a(t)=1$ is chosen provided that $E(t) \geq E_s$. Otherwise, if $K(t) < Thres(t)$, then the optimal policy $\tilde{\pi}_{\tau(\rho)}$  is derived by using extended value iteration (EVI \cite{auer2008near}) provided that $E(t) \geq E_s$,  which uses the   prior history   from $\max\{\tau(\rho)-W,  1\}$ to  $(\tau(\rho)-1)$ to estimate the confidence regions $\mathcal{H}_{w, \tau(\rho)}(s,a)$ for all $(s,a)$ pairs. 

The choice of $Thres(t)$ in AEC-SW-UCRL2 is motivated by the intuition that the threshold should decrease with available energy, energy harvesting rate, and packet success probability in a stationary setting. 
The current channel success probability  $p(t) \in \{p_1, p_2,\cdots, p_m\}$,  $\mathcal{H}_{w, \tau(\rho)}(s,a)$ for all $(s,a)$ pairs,  and precision parameter $\epsilon$ (which assumes value  $1 / \sqrt{ \tau(\rho)}$) are used as input parameters for EVI (see Algorithm~\ref{alg:EVI_cap}) which yields the stationary policy $\tilde{\pi}_{\tau(\rho)}$.

%\subsection{Policy Structure}

%For defining the confidence region on costs $H_{c,t}$ , we consider the empirical cost mean estimator at each time $t$ in episode $\rho$ which is given by:
% \scriptsize
%\begin{eqnarray} \label{eqn:lambda_estimate}
%\hat{c}_{t}(s,a)&=& \frac{\sum_{h=max\{(\tau(\rho)-W),1\}}^{t-1} C_{h}(s,a)\mathbf{1}((s_h,a_h)=(s,a)) }{N_{t}(s,a)^{+}}  
%\end{eqnarray}
%\normalsize
%and 
%\scriptsize
%\begin{eqnarray} \label{eqn:lambda_estimate}
%H_{c,t}(s,a)&=&\{\overline{c}\in [0,1]: |\overline{c}-\hat{c}_{t}(s,a)| \leq \sqrt{\frac{7\log({\mathcal{S}\mathcal{A}T)/\delta}}{2 N_{t}(s,a)^{+}}}\nonumber\\
%\end{eqnarray}
%\normalsize
%{\small
%\bibliographystyle{unsrt}
%\bibliography{arpan-techreport}
%}

\begin{algorithm}
\caption{Age-Energy-Channel based Sliding Window  UCRL2 Algorithm }\label{alg:cap}
\begin{algorithmic}
\State $\bm{Input}: \mathcal{S},\mathcal{A}, T, W$.
\State $\bm{Initialize}$: $\hat{\lambda}_{0}=0$,  $t\leftarrow 1$,  initial state   $s(1)$.
\For{ $\text{episodes}$ $ \rho=1, 2,3 \cdots$} 
\State $\bm{Set}$: $\tau(\rho) \leftarrow t$, $f_{\tau(\rho)}(s,a)  \leftarrow 0$, $\forall (s,a)$. 
\State $\bm{Compute}$: $\hat{\lambda}_{\tau(\rho) }$ and $N_{\tau(\rho)}(s,a)^{+}$ for all $(s,a)$ by using \eqref{eqn:lambda_estimate} and \eqref{eqn:state-action-count}, respectively. 
\State $\bm{Compute}$: $\alpha(\tau(\rho))= e_{\tau(\rho)-1}+1,$ 
\State $\beta(\tau(\rho))= n_{\tau(\rho)-1 }-e_{\tau(\rho)-1} +1 $.
%\State $\alpha_{\hat{q}_j}(\tau(\rho))= \hat{q}_j_{(\tau(\rho)-1)} n_{(\tau(\rho)-1) }+1,$ 
%\State $\beta_{\hat{q}_j}(\tau(\rho))= (1-\hat{q}_j_{(\tau(\rho)-1)}) n_{(\tau(\rho)-1) }+1$
\While{ $t$ is not multiple of W and $f_{\tau(\rho)}(s(t),\Bar{\pi}_v(s(t))) \leq  N_{\tau(\rho)}^{+}(s(t),\Bar{\pi}_v(s(t)))$}
\If{ $E(t) \geq E_s$}\\
Observe the current channel state $C(t)$ and consequently $p(t)$.
\State $ \bm{Let}$: $Thres(t)$  $= max\big(\frac{\alpha(\tau(\rho))+\beta(\tau(\rho)}{E(t)\alpha(\tau(\rho))}, 1/p(t)\big) $, 
%\State $ Thres_{\hat{q}_j}(t)= \frac{\alpha_{\hat{q}_j}(\tau(\rho))+\beta_{{\hat{q}_j}}(\tau(\rho)}{p_j \alpha_{{\hat{q}_j}}(\tau(\rho)) }$
\If{ Age $K(t) \geq Thres(t) $} \\
 Choose $a(t)=1$,  observe age cost $c(s(t), a(t))$ and the next state $s(t+1)$.\\
Update: $f_{\tau(\rho)}(s(t), a(t))  \leftarrow f_{\tau(\rho)}(s(t), a(t))+1$, $ t \leftarrow t+1$.
\Else{  Find  a $ (1 / \sqrt{\tau(\rho)})$ optimal policy $\tilde{\pi}_{\tau(\rho)}$ by using EVI algorithm with
 confidence region $\mathcal{H}_{w,\tau(\rho)}(s,a)$, $\forall$ $(s,a)$ computed by \eqref{eqn:confidence-region}: \\
%Compute a $ (1 / \sqrt{\tau(\rho)})$-optimal policy $\tilde{\pi}_{\tau(\rho)}$ by EVI :\\ 
EVI$(\mathcal{H}_{w,\tau(\rho)}, p(t);1 / \sqrt{ \tau(\rho)})$ $\rightarrow$ $(\tilde{\pi}_{\tau(\rho)})$. \\
Choose $a(t)=\tilde{\pi}_{\tau(\rho)}(s(t))$,  observe age cost $c(s(t), a(t))$ and the next state $s(t+1)$}.\\ 
Update: $f_{\tau(\rho)}(s(t), a(t))  \leftarrow f_{\tau(\rho)}(s(t), a(t))+1$, $ t \leftarrow t+1$.
\EndIf
\Else{ $a(t)=0$,  observe age cost $c(s(t), a(t))$ and $s(t+1)$}.
Update: $f_{\tau(\rho)}(s(t), a(t))  \leftarrow f_{\tau(\rho)}(s(t), a(t))+1$, $ t \leftarrow t+1$.
\EndIf
\If{$t>T$}
\\The algorithm is terminated. 
\EndIf
\EndWhile
\EndFor
\end{algorithmic}
\end{algorithm}

\begin{algorithm}
\caption{ Extended Value iteration (EVI  \cite{auer2008near}) }\label{alg:EVI_cap}
\begin{algorithmic}
\State $\bm{Initialize}$: State values $\tilde{h}_{0}(s)=0$ for all $s \in\mathcal{S}$.
\For{ iteration $z=0,1, 2,3 \cdots$}\\ 
 For each $s \in\mathcal{S}$, compute value iteration :\\
 $\tilde{h}_{z+1}(s)=\min_{a \in \mathcal{A}} \tilde{\Psi}_{z}(s,a) $, where\\
 $\tilde{\Psi}_{z}(s,a)= c(s,a)+ \min_{w \in \mathcal{H}_{w}(s,a)} \{ \sum_{s' \in \mathcal{S}} \tilde{h}_{z}(s') w(s'|s,a)\}  $. \\
 The inner minimization calculation can be found by using a similar technique as in \cite{auer2008near}.\\ 
Find a policy $\tilde{\pi}$, such that, $\tilde{\pi}(s)=\argmin_{a \in \mathcal{A}} \tilde{\Psi}_{i}(s,a)$.\\
\If{$\max_{s \in \mathcal{S}}  \{ \tilde{h}_{z+1}(s)-\tilde{h}_{z}(s)\}$$-$$\min_{s \in \mathcal{S}} \{ \tilde{h}_{z+1}(s)-\tilde{h}_{z}(s)\}$$ \leq \epsilon$}\\
Break the \textbf{for} loop.\\
\EndIf
\EndFor\\
Return a policy $\tilde{\pi}$.
%Output: $(\tilde{\pi}, \tilde{w}, \tilde{g}, \tilde{h})$
\end{algorithmic}
\end{algorithm}

Similar to \cite{cheung2023nonstationary}, in AEC-SW-UCRL2 algorithm, we use a fixed window size which depends on the known variation budgets $V_{\lambda}$ and $V_{q}$ and is given by $W=4 |\mathcal{S}|^{2/3} |\mathcal{A}|^{1/2} T^{1/2}(V_{\lambda}+V_{q})^{-1/2}$; this is where the knowledge of $V_\lambda$ and $V_q$ is used, along with Lemma~\ref{lemma:var_boudget-relation}. However, we relax the assumption of known    $V_{\lambda}$ and $V_{q}$ in the AEC-BORL algorithm proposed in the next subsection, which tunes the window size for different blocks to adapt learning.   
%\begin{eqnarray}
%var_{\lambda,t}=\sum_{h=\tau(\rho)-w}^{t-1}V_{\lambda,h}
%\end{eqnarray}

%\begin{eqnarray}
%var_{q_{j},t}=\sum_{h=\tau(\rho)-w}^{t-1}V_{q_{j},h}
%\end{eqnarray}
\subsection{AEC-BORL algorithm
}\label{section:Age Aware BORL} 
 In this section, we adapt the BORL algorithm  \cite{cheung2023nonstationary} for RL in a non-stationary environment, in order to develop our  AEC-BORL algorithm for AoI minimization in non-stationary EH system model with unknown variation budgets $V_{\lambda}$ and $V_q$. In AEC-BORL,  the entire time horizon is divided into $\lfloor \frac{T}{L} \rfloor$ blocks of equal length $L$ and a last block whose length might be less than or equal to $ L$. AEC-BORL, like BORL,  probabilistically selects a window size $W_{\theta}$ for block $\theta \in \lfloor \frac{T}{L} \rfloor $ from a finite set $\mathcal{Q}$. At the starting of each block $\theta$, the AEC-BORL algorithm first selects a window size $W_{\theta} \in \mathcal{Q}$ by using some master algorithm, it then restarts the AEC-SW-UCRL2 algorithm with the selected window size $W_{\theta}$ as a sub-routine to choose actions for block $\theta$. Afterward, the total cost of block $\theta$ is fed back to the master, and then the probabilities of selecting a window size are updated accordingly. Here,  we use the EXP3.P algorithm for multi-armed bandit against an adaptive adversary  \cite[Section 3.2]{bubeck2012regret} as the master algorithm to choose $W_{\theta}$. While in AEC-SW-UCRL2, the optimal $W$ was chosen as a function of the variation budgets, AEC-BORL has to learn the optimal $W_{\theta}$ using a bandit algorithm due to the lack of a prior knowledge of the variation budgets.  Note that, the authors in \cite{cheung2023nonstationary} have also used the EXP3.P algorithm for choosing window sizes in BORL for the reward maximization problem,  but we have modified and re-parameterized it for the AoI cost minimization problem under EH remote sensing. It is also worth pointing out that the threshold comparison step is executed in AEC-BORL because AEC-SW-UCRL2 is called as a subroutine in AEC-BORL.

\subsubsection{Outline of AEC-BORL}\label{parameters-AEC-BORL}
In order  to ensure a bounded single stage cost, we assume that once $K(t)$ crosses a very large pre-defined number $K_{max}$, the corresponding single stage cost for $r(t)=0$ is $K_{max}$, and the age component in the state is also saturated at $K_{max}$. 
For   AEC-BORL, we use the following parameters:
\begin{eqnarray}\label{BORL_parameter1}
&& L=  \bigg \lfloor 3  \bigg(\frac{B}{K_{max}}\bigg)^{\frac{2}{3}} A^{\frac{1}{2} } T^{\frac{1}{2}} \bigg \rfloor, \hspace{0cm} \Delta_{W}= \lfloor L \rfloor,  \hspace{0cm}  \Delta= \Delta_{W}+1, \nonumber\\
&& \hspace{2.5cm} \mathcal{Q}=\{L^{0}, \lfloor L^{\frac{1}{\Delta_{W}} } \rfloor , \cdots, L\}
\end{eqnarray}

Here $L$ is the block length and set $\mathcal{Q}$ consists of all possible window sizes. Also, $\tilde{C}_{\theta}(W_{\theta},s)$ denotes the total (random) cost for running the AEC-SW-UCRL2 algorithm with selected window size $W_{\theta}$ for block $\theta$,  starting from an initial state $s$. 
Considering each element of set $\mathcal{Q}$ as an arm, the EXP3.P algorithm for choosing window size can be explained by following steps:

\begin{enumerate}
  \item Initialize,
  \begin{eqnarray} \label{BORL_parameter2}
&&\alpha=0.95 \sqrt{\frac{\ln \Delta}{\Delta \lceil T/L \rceil}}, \hspace{0.5cm} \beta= \sqrt{\frac{\ln \Delta}{\Delta \lceil T/L \rceil}}, \nonumber\\
&& \hspace{1.8 cm}\gamma= 1.05 \sqrt{\frac{\Delta \ln \Delta}{ \lceil T/L \rceil}}
\end{eqnarray}

 where $\alpha > 0$ and $\beta, \gamma \in [0,1]$.
 
\item Initially at time $t=1$, we set $g_{l,1}=0$ for all $l \in \mathcal{M}$ where $ \mathcal{M}=  \{0, 1, \cdots \Delta_{W}\}$.
\item The AEC-BORL algorithm first observes the state $s((\theta-1)L+1)$ at the starting of each block $\theta \in \lceil \frac{T}{L} \rceil$ and then computes the probability distribution:
\begin{eqnarray} \label{BORL_parameter3}
\chi_{l,\theta}= (1-\gamma)\frac{\exp{(\alpha g_{l,\theta}})}{\sum_{l^{'} \in M}  \exp{ (\alpha g_{l^{'},\theta}})} + \frac{\gamma}{\Delta},   \hspace{0.5 cm} \forall l
\end{eqnarray}

For block $\theta$, it selects $l_{\theta}=l$ with probability $\chi_{l,\theta}$ which gives the window size $W_{\theta}= \lfloor L^{l_{\theta}/\Delta_{W}} \rfloor $
\item After that, for each round ~$t$ in block $\theta$, the AEC-BORL algorithm with initial state $s((\theta-1)L+1)$, 
selects optimal actions by running the AEC-SW-UCRL2 algorithm with window size $W_{\theta}$.  At the end of each block $\theta$, the AEC-BORL algorithm observes the
total costs $\tilde{C}_{\theta}(W_{\theta}, s((\theta-1)L+1))$ and divides it by $LK_{max}$ to bring it to   $[0, 1]$ range, and update:
\begin{eqnarray} \label{BORL_parameter4}
g_{l,\theta+1}
&=& g_{l,\theta}+(\beta+ \mathbf{1}_{\{l_{\theta}=l\}} \cdot  \tilde{C}_{\theta}(W_{\theta}, s((\theta-1)L+\nonumber\\
&&1)))/ (LK_{max}\chi_{l,\theta}), \hspace{0.5cm} \cdots \forall l  
\end{eqnarray}
\end{enumerate}
\normalsize

The pseudocode of AEC-BORL algorithm is described by Algorithm \ref{alg:AEC_borl}.

 \begin{algorithm}
\caption{AEC-BORL Algorithm }\label{alg:AEC_borl}
\begin{algorithmic}
\State $\bm{Input}$: $\mathcal{S},\mathcal{A}, T$, initial state   $s(1) $.
\State $\bm{Initialize}$: $L, \Delta_{W}, \Delta , \mathcal{Q}$ by using \eqref{BORL_parameter1}  and $\alpha, \beta, \gamma$ by using \eqref{BORL_parameter2}.
\State $\bm{Set}$: $g_{l,1} \leftarrow 0$, $\forall l \in \mathcal{M}$, where $\mathcal{M} \leftarrow \{l^{'}: l^{'}\in \{0, 1, \cdots \Delta_{W}\}\}$.
\For{ $\text{block}$ $\theta=1, 2,3 \cdots \lceil \frac{T}{L} \rceil$} \\
Find probability distribution $\chi_{l,\theta}$ according to \eqref{BORL_parameter3} and select $l_{\theta}=l$ with probability $\chi_{l,\theta}$.
\State $\bm{Set}$: $W_{\theta} \leftarrow  \lfloor L^{l_{\theta}/\Delta_{W}} \rfloor$
\For{ $t=(\theta-1)L+1,  \cdots max (\theta L, T)$} \\
Use the AEC-SW-UCRL2 algorithm with window size $W_{\theta}$ for block $\theta$ to find optimal policy $\Bar{\pi}$ which depends on age, energy availability, and channel quality.\\ 
At the end of block $\theta$, observe the total cost $ \tilde{C}_{\theta}( W_{\theta}, s((\theta-1)L+1))$ and next state.
\EndFor\\
Update $g_{l,\theta+1}$ by using \eqref{BORL_parameter4}.
\EndFor
\end{algorithmic}
\end{algorithm}

%\vspace{-12pt}
\section{Multiple Sources ($N>1$) }\label{section: multi-source-nonstat}
First, we briefly discuss the results obtained from our previous work \cite{jaiswal2024whittlesindexbasedageofinformationminimization} for multiple source system working in a stationary environment ($\lambda_{i,t}=\lambda_{i}$ and $q_{j,i,t}=q_{j,i} \hspace{0.1cm}, \forall t \geq 1$) which will be used for designing the optimal source scheduling and sampling algorithms for multi-source system under a non-stationary environment.
\subsection{ Solution for stationary setting}
Here, our objective is to find a policy $\bm{\pi}$ that minimizes the expected average AoI across all sources averaged over time, while adhering to the source scheduling constraint for probing at each time:
\begin{align}\label{eqn:main-problem-ms}
 \min_{\bm{\pi}}  \lim_{T \rightarrow \infty}\frac{1}{TN} \sum_{t=1}^T \sum_{i=1}^N \mathbb{E}_{\bm{\pi}} (K_i(t))\nonumber\\
 s.t. \hspace{0.2 cm } \sum_{i=1}^N  b_{i}(t) \leq 1, \hspace{0.6 cm } \forall t \geq 1
\end{align}

In our prior work \cite{jaiswal2024whittlesindexbasedageofinformationminimization}, the problem in \eqref{eqn:main-problem-ms} was formulated as a long-run average cost MDP. Let $s_{i} \in \mathcal{S}_{i}$ represent the generic state of the $i$-th source at time $t$; given by $s_{i} =(E_{i}, K_{i})$. The available buffer energy of source $i$ at time $t$ is denoted by $E_{i}(t) \in \{0, 1,...,B_{i}\}$, whereas the AoI of source node $i$ at time $t$ is denoted by $K_{i}(t)$. The state space for source~$i$ is thus $\mathcal{S}_{i} \doteq \{0,1,\cdots,B_i\} \times \mathbb{Z}_+$ with $(E_i, K_i)$ as a generic state of source $i$. When the $i$-th source is chosen to probe the channel at any time $t$ (i.e., $b_i(t)=1$)), it encounters a generic intermediate state $v_{i} = (E_{i},K_{i},C(i, \cdot))$ with the corresponding intermediate state space $\mathcal{V}_{i}=\{0, 1,\cdots,B_i\}\times \mathbb{Z}_{+} \times \{C_{1}, C_{2},\cdots, C_{m}\}$, where channel state $C(i, \cdot)$ is  obtained by probing the $i$-th source node.  At time~$t$, the centralized scheduler computes $\{b_i(t)\}_{1 \leq i \leq N}$, such that that $\sum_{i=1}^N b_i(t)=1$. Afterwards, a source $i$ with $b_i(t)=1$ selects  action $a_i(t) \in \{0,1\}$.

 In \cite{jaiswal2024whittlesindexbasedageofinformationminimization}, we replaced the hard scheduling constraint in \eqref{eqn:main-problem-ms} at each time by a soft constraint on the time-averaged mean number of channel probes:
\begin{align}\label{eqn:main-problem_CMDP-ms}
 \min_{\bm{\pi}}  \lim_{T \rightarrow \infty}\frac{1}{TN} \sum_{t=1}^T \sum_{i=1}^N  \mathbb{E}_{\bm{\pi}} (c_{i}(s_{i}(t), b_{i}(t),a_{i}(t)))\nonumber\\
 s.t. \hspace{0.2 cm } \frac{1}{TN} \sum_{t=1}^T\sum_{i=1}^N \mathbb{E}_{\bm{\pi}} [ b_{i}(t)] \leq  \frac{1}{N} \hspace{0.3 cm } 
 \normalsize
\end{align}
 Next, we employed the Lagrangian relaxation technique by using a Lagrange multiplier $\hat{\mu}$, to transform the aforementioned CMDP into an unconstrained MDP that   minimizes the following Lagrangian:
\begin{eqnarray}\label{eqn:Lagrange_CMDP_policy-ms}
 \mathcal{L}(\bm{\pi},\hat{\mu})  &=& \lim_{T \rightarrow \infty}\frac{1}{TN} \sum_{t=1}^T \sum_{i=1}^N  ( \mathbb{E}_{\bm{\pi}}  [c_{i}(s_{i}(t), b_{i}(t),a_{i}(t)]+\nonumber\\ 
 &&\hat{\mu}  \mathbb{E}_{\bm{\pi}}  [b_{i}(t)] ) 
\end{eqnarray}
\normalsize

Let us define $\mathcal{L}^*(\hat{\mu})= \min_{\bm{\pi}} \mathcal{L}(\bm{\pi},\hat{\mu})$ as the Lagrangian dual function for any $\hat{\mu} \geq 0$, and let  $\bm{\pi}^*_{\hat{\mu}} \doteq\arg \min_{\bm{\pi}}\mathcal{L}(\bm{\pi},\hat{\mu})$. 

Remarkably, the task of finding an optimal policy $\bm{\pi}^*_{\hat{\mu}}$ for a given $\hat{\mu}$ may be divided into $N$ separate sub-problems across all sources. Hence, the Lagrangian in \eqref{eqn:Lagrange_CMDP_policy-ms} can be written as:
\begin{eqnarray}\label{eqn:Lagrange_alternate_CMDP_policy-ms}
 \mathcal{L}(\bm{\pi},\hat{\mu})  &=&  \frac{1}{N}  \sum_{i=1}^N \mathcal{L}_{i}(\bm{\pi}_i,\hat{\mu}) 
\end{eqnarray}
\normalsize

The Lagrangian for the $i$-th sub-problem of source $i$, denoted as $\mathcal{L}_{i}(\bm{\pi}_i,\hat{\mu})$, is defined as follows:
\begin{eqnarray}\label{eqn:Lagrange_persource_CMDP_policy}
  \mathcal{L}_{i}(\bm{\pi}_i,\hat{\mu}) &=& \lim_{T \rightarrow \infty}\frac{1}{T} \sum_{t=1}^T ( \mathbb{E}_{\bm{\pi}_i} [ c_{i}(s_{i}(t), b_{i}(t),a_{i}(t)]+\nonumber\\ 
 &&\hat{\mu}  \mathbb{E}_{\bm{\pi}_i}  [b_{i}(t)] ),  \hspace{0.6 cm}  \forall i \in \{1, 2, \cdots, N\}
\end{eqnarray}
\normalsize

Here, $\bm{\pi}_i$ refers to the policy employed for the $i$-th source. The sub-problem for the $i$-th source is an unconstrained average cost MDP. The solution of a $\alpha$-discounted cost MDP problem in the regime $\alpha \uparrow 1$ is used to solve the above-defined unconstrained average cost MDP. 

\subsubsection{Bellman Equation for the discounted cost problem \cite{jaiswal2024whittlesindexbasedageofinformationminimization} for source~$i$}
The optimal value function for state $(E_i,K_i)$ of source $i$ in the discounted cost problem is denoted by $J_{i}^{*}(E_i,K_i)$. The cost-to-go from an intermediate state $(E_i,K_i,C(i, \cdot))$ is represented by $W_{i}^*(E_i,K_i,C(i,\cdot))$. The Bellman equations for the $\alpha$-discounted cost MDP problem for each $i$-th source, as discussed in \cite{jaiswal2024whittlesindexbasedageofinformationminimization}, are given below: 
% \begin{eqnarray} \label{eqn:Bellman-eqn-multiple-source-with-fading-general}
% &&J^{*}_{i}(E_{i} \geq E_{s},K_i)\nonumber\\
% &=& \min \bigg\{\hat{u}_i(E_i,K_i), \hat{v}_i(E_i,K_i) \bigg\}\nonumber\\
% &&\hat{u}_i(E_i,K_i) \nonumber\\
% &=& K_i+\alpha \mathbb{E}_{A_i}J_i^{*}(\min\{E_i+A_i,B_i\},K_i+1),\nonumber\\
% &&\hat{v}_{i}(E_i,K_i) \nonumber\\
% &=& \hat{\mu}+\sum_{j=1}^m q_{j,i} W^{*}_{i}(E_i,K_{i},C_{j})\nonumber\\ 
% &&W^{*}_{i}(E_i,K_i,C(i, \cdot))\nonumber\\
% &=& \min\{K_i+\alpha \mathbb{E}_{A_i}J^{*}_{i}(\min\{E_i+A_i,B_i\},K_i+1),\nonumber\\ 
% &&K_i(1-p(C(i, \cdot)))+\alpha p(C(i, \cdot))\mathbb{E}_{A_i}J^{*}_{i}(\min\{E_i-\nonumber\\
% &&E_{s}+A_i,B_i\},1)+\alpha (1-p(C(i, \cdot)))\mathbb{E}_{A_i}J^{*}_{i}(\min\{\nonumber\\
% &&E_i-E_{s}+A_i,B_i\},K_{i}+1)\}\nonumber\\
% &&J^{*}_{i}(E_i< E_{s},K_i)\nonumber\\
% &=& K_i+\alpha \mathbb{E}_{A_i}J^{*}_{i}(\min\{E_i+A_i,B_i\},K_i+1)
% \end{eqnarray}
\begin{eqnarray} \label{eqn:Bellman-eqn-multiple-source-with-fading-general}
J^{*}_{i}(E_{i} \geq E_{s},K_i)
&=& \min \bigg\{\hat{u}_i(E_i,K_i), \hat{v}_i(E_i,K_i) \bigg\}\nonumber\\
\hat{u}_i(E_i,K_i) 
&=& K_i+\alpha \mathbb{E}_{A_i}J_i^{*}(\min\{E_i+A_i,B_i\},\nonumber\\
&&K_i+1),\nonumber\\
\hat{v}_{i}(E_i,K_i)
&=& \hat{\mu}+\sum_{j=1}^m q_{j,i} W^{*}_{i}(E_i,K_{i},C_{j})\nonumber\\ 
W^{*}_{i}(E_i,K_i,C(i, \cdot))
&=& \min\{K_i+\alpha \mathbb{E}_{A_i}J^{*}_{i}(\min\{E_i+A_i,\nonumber\\ 
&&B_i\},K_i+1),K_i(1-p(C(i, \cdot)))+\nonumber\\
&&\alpha p(C(i, \cdot))\mathbb{E}_{A_i}J^{*}_{i}(\min\{E_i-E_{s}+\nonumber\\
&&A_i,B_i\},1)+\alpha (1-p(C(i, \cdot)))\mathbb{E}_{A_i}\nonumber\\
&&J^{*}_{i}(\min\{E_i-E_{s}+A_i,B_i\},K_{i}+1)\}\nonumber\\
J^{*}_{i}(E_i< E_{s},K_i)
&=& K_i+\alpha \mathbb{E}_{A_i}J^{*}_{i}(\min\{E_i+A_i,B_i\},\nonumber\\
&&K_i+1)
\end{eqnarray}

The cost of not choosing the $i$-th source to probe channel state ($b_{i}(t)=0$) is represented by the term $\hat{u}_i(E_i, K_i)$ whereas the expected cost of choosing source $i$ to probe the channel state is expressed as $\hat{v}_{i}(E_i, K_i)$, which also includes the penalty cost $\hat{\mu}$ for selecting the $i$-th source to probe. The probed $i$-th source encounters an intermediate state $(E_i,K_i,C(i, \cdot))$ for which a sampling decision $a_{i}(t) \in \{0,1\}$ is chosen, which results in corresponding single stage cost and a random next state.  When $E_i<E_s$, $(b_{i}(t)=0, a_{i}(t)=0)$ is the only feasible action as provided in the last equation in \eqref{eqn:Bellman-eqn-multiple-source-with-fading-general}. 

For the $i$-th sub-problem with optimal policy $\pi_{i}^*$,  we define $\mathcal{I}(\hat{\mu})$ as the set of states where it is optimal to not probe the $i$-th source when the source probing charge is $\hat{\mu}$;  i.e., $ \mathcal{I}(\hat{\mu})= \{(E_i, K_i): \hat{u}_i(E_i, K_i)< \hat{v}_i(E_i, K_i)\}$. Furthermore, let us define $\mathcal{WI}_i(E_i, K_i)$ as the Whittle index for the state $(E_i, K_i)$ of source $i$, which represents the minimum charge $\hat{\mu}$ that would make probing and not probing decisions equally desirable in that state.

\begin{definition}
The sub-problem corresponding the $i$-th source is indexable, if $\mathcal{I}(\hat{\mu})$ increases monotonically form $\emptyset$ to the entire space as $\hat{\mu}$  increases from $0$ to $\infty$. The AoI minimization problem is indexable if all the $N$  sub-problems are indexable. 
\end{definition}

 Here, we assume that the indexability property holds for all the sub-problems.
At any time $t$, the optimal source scheduling and sampling policy for the $\alpha$-discounted  cost problem is to select the $i^*$-th source node with the highest Whittle's index to probe the channel state such that $E_{i^*} \geq E_s$, i.e., $ i^*(t)=\arg \max_{1 \leq i \leq N} \mathcal{WI}_i(t)$. Afterward, for the selected $i^*$-th source node, the most effective sampling strategy is a threshold policy based on $p(C (i^*,\cdot))$ which prescribes that node $i^*$ must be sampled if the probability $p(C(i^*,\cdot))$ is greater than or equal to the threshold $p_{th}(E_{i^*}, K_{i^*})$. 
For detailed analysis, please see our previous work \cite{jaiswal2024whittlesindexbasedageofinformationminimization}. 

\vspace{-6pt}

\subsection{WIT-SW-UCRL2 algorithm for non-stationary environment}\label{section: WIT Sliding Window UCRL2} 
In our prior work \cite{jaiswal2024whittlesindexbasedageofinformationminimization}, for stationary system setting with unknown channel states and EH characteristics, we 
proposed a Q-learning based algorithm named Q-WITS3 which seeks to learn Whittle’s indices and optimal policies. 
In this subsection, we assume that the energy arrival rate $\lambda_{i}$ and channel state probabilities $q_{j,i}$ are varying with time. Therefore, we divide the time horizon into episodes, and at the start of each episode, we estimate the energy arrival rate and channel state probabilities based on the recent history. By using these estimates, we learn Whittle's indices and optimal policies.

Here, we present a novel algorithm named Whittle's index and threshold based sliding window upper confidence bound reinforcement learning (WIT-SW-UCRL2) algorithm for a multi-source single sink wireless network working under a non-stationary environment  where the parameters $\lambda_{i,t}$ and $\{q_{j,i,t}\}_{1 \leq j \leq m}$ for each source $i \in \{1,2, \cdots,N\}$ change with time $t$. 
The tuple $(\mathcal{S}_i, \mathcal{A}_i, T, c_i, w_i)$ describes the MDP corresponding to the $i$-th source where $\mathcal{S}_i$,  
$\mathcal{A}_i= \{(0,0), (1,0), (1,1)\}$ and $c_i$ denote  the state space, the action space and the single stage cost function of the $i$-th source, respectively. The collection of state transition probabilities corresponding to the $i$-source for all time instants is denoted as $w_i = \{w_{i,t}\}_{t=1}^T$, where $w_{i,t}(\cdot|s_i, b_i, a_i)$ represents the probability distribution of the subsequent state of source $i$, given that the state-action pair at time $t$ is $(s_i, b_i, a_i)$.

 Let us denote by $s_{i} \in  \mathcal{S}_{i}$ a   generic state of the  $i$-th source,  which is given by  $s_{i} =(E_{i}, K_{i})$.
 %Here,     $E_{i} \in \{0, 1, . . .,B_{i}\}$ is the available buffer energy of source $i$, and  $K_{i}$ is the AoI of source node $i$. Thus $\mathcal{S}_{i} \doteq \{0,1,\cdots,B_i\} \times \mathbb{Z}_+$ is the   state space for source~$i$. 
  At any time $t$, if the $i$-th source is selected to probe the channel (i.e., $b_i(t)=1$)), it encounters a generic intermediate state $ (E_{i},K_{i},C(i, \cdot))$. 
 %with the corresponding state space $\mathcal{V}_{i}=\{0, 1,\cdots,B_i\}\times \mathbb{Z}_{+} \times \{C_{1}, C_{2},\cdots, C_{m}\}$.
 %this generic intermediate state   additionally contains   the current channel state $C(i, \cdot)$ obtained by probing the $i$-th source node.
 The system state at any time $t$ is defined as $\bm{s}(t) = (s_{1}(t), . . . , s_{N}(t)) \in \mathcal{S}$ where $\mathcal{S} = \mathcal{S}_{1} \times  \cdots, \times \mathcal{S}_{N}$. 
 %The DM knows $\bm{s}(t)$ at time~$t$, and uses this information to compute the probing decision $\{b_i(t)\}_{1 \leq i \leq N}$ such that $\sum_{i=1}^N b_i(t)=1$. Then a source $i$ with $b_i(t)=1$ chooses  $a_i(t) \in \{0,1\}$ and acts accordingly.  

Additionally, we define the variation budgets $V_{\lambda,i}$ and $V_{q,i}$  corresponding to the $i$-source as follows:
\begin{align}\label{eqn:main-problem1-var-ms}
V_{\lambda,i} \doteq  \sum_{t=1}^{T-1} V_{\lambda,i,t},  \hspace{1cm}   V_{q,i} \doteq \sum_{t=1}^{T-1} V_{q,i, t}
\end{align}
where $V_{\lambda,i,t} \doteq |\lambda_{i,t+1} - \lambda_{i,t}|$ and
$V_{q,i, t} \doteq max_{1 \leq j \leq m} |q_{j,i,t+1} - q_{j,i,t}|$. The learner is unaware of the values of $V_{\lambda,i,t}$ and $V_{q,i,t}$. Nevertheless, in this section, we assume that $V_{\lambda,i}$ and $V_{q,i}$ are known and are utilized to determine the sliding window size $W$. However, we will relax this assumption in the subsequent section.

We can also define the variation budget on state transition probabilities for each source $i$ as $V_{w,i}=\sum_{t=0}^{T-2} V_{w,i, t}$ 
where  $V_{w,i,t} \doteq max_{s_i \in \mathcal{S}_i, (b_i,a_i) \in \mathcal{A}_i}\lVert {w}_{i,t+1}(\cdot|s_i,b_i,a_i)- {w_{i,t}}(\cdot|s_i,b_i,a_i)\rVert_{1}$. The following lemma establishes a connection between $V_{w,i}$ and $V_{\lambda,i}$ and $V_{q,i}$ of each source $i$.  
\begin{lemma} \label{lemma:var_boudget-relation-ms}
For each source $i$ and every value $t$,  it holds that
$V_{w,i,t}\leq V_{\lambda,i,t+1}+V_{q,i,t+1}$ and consequently  $V_{w,i}\leq V_{\lambda,i}+V_{q,i}$.
\end{lemma}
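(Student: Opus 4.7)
The plan is to reduce the multi-source statement to the single-source calculation that already proved Lemma~\ref{lemma:var_boudget-relation}. By the model assumptions, each source $i$ has its own Bernoulli energy arrival stream $\{A_i(t)\}$ with rate $\lambda_{i,t}$ and its own i.i.d.\ channel-state stream $\{C(i,t)\}$ with distribution $\bm{q}_{i,t} \doteq (q_{1,i,t},\ldots,q_{m,i,t})$, and these streams are independent across $i$. Therefore the per-source kernel $w_{i,t}(\cdot|s_i,b_i,a_i)$ is determined entirely by the pair $(\lambda_{i,t+1},\bm{q}_{i,t+1})$, and the $i$-th source can be analysed in complete isolation from the rest, exactly as in the single-source lemma.

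First I would write $w_{i,t}(\cdot|s_i,b_i,a_i)$ down explicitly for each $(b_i,a_i) \in \mathcal{A}_i = \{(0,0),(1,0),(1,1)\}$, using the energy-update rule $E_i \mapsto \min\{E_i+A_i,B_i\}$ (respectively $E_i \mapsto \min\{E_i-E_s+A_i,B_i\}$ when $a_i=1$) together with the age-update rule; averaging out the channel state collapses the $\bm{q}$-dependence to the scalar packet-success marginal $\bar{p}_{i,t+1} \doteq \sum_{j=1}^{m} q_{j,i,t+1}\, p_j$. Inserting a hybrid kernel $\tilde{w}_i$ built from $\lambda_{i,t+2}$ together with $\bm{q}_{i,t+1}$, the triangle inequality gives
\[
\lVert w_{i,t+1}(\cdot|s_i,b_i,a_i) - w_{i,t}(\cdot|s_i,b_i,a_i)\rVert_1 \;\le\; \lVert w_{i,t+1} - \tilde{w}_i \rVert_1 + \lVert \tilde{w}_i - w_{i,t} \rVert_1.
\]
The first term isolates the change in the energy-arrival law and reduces to the total variation between two Bernoullis, bounded by $|\lambda_{i,t+2}-\lambda_{i,t+1}| = V_{\lambda,i,t+1}$. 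The second term isolates the change in the channel-state law and, because it enters the kernel only through the scalar $\bar{p}_{i,\cdot}$ with coefficients $p_j \in [0,1]$, is bounded by $V_{q,i,t+1}$.

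Taking the maximum over $(s_i,b_i,a_i) \in \mathcal{S}_i \times \mathcal{A}_i$ on both sides then yields $V_{w,i,t} \le V_{\lambda,i,t+1} + V_{q,i,t+1}$, and summing over $t=0,1,\ldots,T-2$ gives $V_{w,i} \le V_{\lambda,i} + V_{q,i}$.

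The main obstacle I anticipate is the second term of the triangle inequality above: matching the $\ell_1$ TV distance between kernels against $V_{q,i,t+1}$, which is only a coordinate-wise maximum of the $\bm{q}_{i,\cdot}$-changes rather than a full $\ell_1$ norm, requires exploiting the cancellation $\sum_j(q_{j,i,t+2} - q_{j,i,t+1}) = 0$ together with the fact that the resulting age transition has binary support (age resets to $1$ with probability $\bar{p}_{i,\cdot}$ or advances to $K_i+1$ with probability $1-\bar{p}_{i,\cdot}$). Mirroring the corresponding step from the single-source computation in Appendix~\ref{appendix:proof-of-var_boudget-relation} is what makes the constant-one factor in front of $V_{q,i,t+1}$ come out correctly; once this is done, the rest is routine coupling and triangle-inequality bookkeeping.
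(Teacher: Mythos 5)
Your overall strategy is the paper's: decouple the sources, write the per-source kernel explicitly, and separate the $\lambda$-variation from the $q$-variation via a product-rule/interpolation argument --- the hybrid kernel $\tilde w_i$ is just a cleaner packaging of the coordinate-wise bound $|q'\lambda'-q\lambda|\le \lambda|q'-q|+q'|\lambda'-\lambda|$ that the single-source proof in Appendix~\ref{appendix:proof-of-var_boudget-relation} applies case by case, and the paper's proof of this lemma is literally ``follow those same steps per source.''

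However, the step you yourself flag as ``the main obstacle'' is a genuine gap, and the resolution you sketch does not close it. Once you marginalize the channel out of the kernel, the $\bm q_{i,\cdot}$-dependence enters only through the mixture $\bar p_{i,\cdot}=\sum_j q_{j,i,\cdot}\,p_j$, and the relevant increment is $\bigl|\sum_j \delta q_{j,i}\, p_j\bigr|$. This is controlled by $\sum_j|\delta q_{j,i}|$, i.e.\ by $m\max_j|\delta q_{j,i}|=m\,V_{q,i,t+1}$ in the worst case; the cancellation $\sum_j\delta q_{j,i}=0$ only lets you replace $p_j$ by $p_j-c$ and thus improves the constant multiplying $\sum_j|\delta q_{j,i}|$ (to $\tfrac12(\max_jp_j-\min_jp_j)$) --- it does not convert the $\ell_1$ sum over $j$ into the $\ell_\infty$ maximum that defines $V_{q,i,t+1}$. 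Nor is there a ``corresponding step'' in Appendix~\ref{appendix:proof-of-var_boudget-relation} to mirror: there the channel realization $C_j$ is retained in the next state, so every branch of the case analysis carries a single factor $q_{j,t+1}$ (times $p_j$ or $1-p_j$), its increment is bounded by $|\delta q_{j,t+1}|\le \max_j|\delta q_{j,t+1}|$ directly, and the mixture $\sum_j q_jp_j$ is never formed. The fix is to do the same here: keep the probed channel state in the case enumeration (equivalently, route the transition through the intermediate state $(E_i,K_i,C_j)$) and bound branch by branch, which is what the paper means by ``similar steps.'' Two smaller points: with your hybrid $(\lambda_{i,t+2},\bm q_{i,t+1})$ the first triangle-inequality term isolates the $q$-change and the second the $\lambda$-change, not the other way around as you state; and the $\ell_1$ distance between $\mathrm{Bernoulli}(\lambda')$ and $\mathrm{Bernoulli}(\lambda)$ is $2|\lambda'-\lambda|$, a factor the paper's own passage from per-branch increments to the $\ell_1$ norm also glosses over, so you inherit rather than introduce that particular looseness.
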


\begin{proof}
The proof can be done by following similar steps as in Appendix ~\ref{appendix:proof-of-var_boudget-relation}.
\end{proof}
%The aim of the source is to minimize the cumulative expected cost overall (not necessarily stationary or deterministic) policies $\pi$:
%$$\mathbb{E}_{\pi}[\sum_{t=1}^T c(s(t), a(t))].$$
Our goal is to minimize the cumulative age averaged   across all the sources, subject to the constraint as in \eqref{eqn:main-problem-ms} :
\begin{eqnarray}\label{eqn:main-problem1-ms}
  \min_{\pi} \underbrace{\sum_{t=1}^T \sum_{i=1}^N  \frac{1}{N} ( \mathbb{E}_{\pi}[c_i(s_{i}(t),b_{i}(t), a_{i}(t))]}) \nonumber\\
   s.t. \hspace{0.2 cm } \sum_{i=1}^N  b_{i}(t) \leq 1, \hspace{0.6 cm } \forall t \geq 1
\end{eqnarray}

%where $\zeta_{i,t}^*$ is the optimal long-term average cost of a stationary MDP corresponding to source $i$ with known energy arrival rate $\lambda_{i,t}$ and the channel state distribution $\bm{q}_{i,t}$ for all time instants. 

 %\begin{align}\label{eqn:main-problem-discounted}
%\footnotesize
% \min_{\bm{\pi}}  \lim_{T \rightarrow \infty}\frac{1}{TN} \sum_{t=1}^T \sum_{i=1}^N \alpha^t \mathbb{E}_{\bm{\pi}} (c_{i}(s_{i}(t), b_{i}(t),a_{i}(t)))\nonumber\\
% s.t. \hspace{0.2 cm } \sum_{i=1}^N  b_{i}(t) \leq 1 \hspace{0.3 cm } \forall t
% \normalsize
%\end{align}
\subsubsection{Algorithm description}
WIT-SW-UCRL2 as described in Algorithm~\ref{alg:cap_ms} employs a sliding window of size $W$ to estimate the energy arrival rate and state transition probabilities of each source $i$ as well as to determine their respective confidence regions, accomplished by time averaging over a maximum of $W$ recent samples. 
The algorithm partitions the full time horizon $T$ into several episodes. The beginning and ending times of any episode $\rho$ are determined by $\tau(\rho)$ and $\tau(\rho+1)-1$, respectively, where $\tau(1)=1$.

We employ the empirical mean estimate of $ \lambda_{i,t}$ for source $i$ at any time step $t$ in episode $\rho$ as: 
\begin{eqnarray} \label{eqn:lambda_estimate_ms}
%\hat{\lambda}_{t}&=& \frac{\sum_{h=max\{(\tau(\rho)-W),1\}}^{t} A_h}{\sum_{h=maxfor \{(\tau(\rho)-W),1\}}^{t} 1} 
\hat{\lambda}_{i,t}&=& \frac{e_{i,t}}{n_t}
\end{eqnarray}
where  $e_{i,t} \doteq \sum_{h=max\{(\tau(\rho)-W),1\}}^{t} A_{i}(h)$,  and $n_t \doteq t-max\{(\tau(\rho)-W),1\}+ 1$ . Furthermore, the empirical mean estimator of $ q_{j,i,t}$ is given by:
\begin{eqnarray} \label{eqn:q_estimate_ms}
\hat{q}_{j,i,t}&=& \frac{o_{j,i,t}}{\tilde{n}_{i,t}}
\end{eqnarray}
where  $o_{j,i,t} \doteq \sum_{h=max\{(\tau(\rho)-W),1\}}^{t}X_{j,i}(h) \mathbf{1}\{b_{i}(h)=1\}$ with $X_{j,i}(h) \mathbf{1}\{b_{i}(h)=1\}$ denoting the indicator that the $i$-th source is selected to probe the channel and it encounters a channel state $j$ at time $h$ and $\tilde{n}_{i,t}$ is the number of times source $i$ is probed in the window size $W$ prior to time $t$, i.e., $\tilde{n}_{i,t} \doteq t-((max\{(\tau(\rho)-W),1\}+ 1) \mathbf{1}\{b_{i}(h)=1\})$. 

We also define the following counters for every state-action pair $(s_i, b_i, a_i)$ of source $i$ and every time $t$ in episode $\rho$.
\begin{eqnarray} \label{eqn:state-action-count_ms}
N_{i,t}(s_i,b_i,a_i)
&=& \sum_{h=max\{(\tau(\rho)-W),1\}}^{t-1} \mathbf{1}\{(s_{i}(h),b_{i}(h),\nonumber\\
&&a_{i}(h))=(s_i,b_i,a_i)\}\nonumber\\
N_{i,t}^+(s_i,b_i,a_i)
&=&max\{N_{i,t}(s_i,b_i,a_i),1\}
\end{eqnarray}
\normalsize

Also,  the counter for the number of times we come across state action pair $(s_{i}(t),b_{i}(t), a_{i}(t))=(s_i,b_i, a_i)$ for source $i$ in episode $\rho$ up to time $t$ is given as follows:
\begin{eqnarray} \label{eqn:state-action-count-within-episode_ms}
f_{i,\tau(\rho)}(s_i,b_i,a_i)&=& \sum_{h=\tau(\rho)}^{t-1} \mathbf{1}\{(s_{i}(h),b_{i}(h),a_{i}(h))\nonumber\\
&&=(s_i,b_i,a_i)\}
\end{eqnarray}
\normalsize

Here, the $\rho^{th}$ episode is terminated by the DM if either a time index $t$ is a multiple of $W$ or if $\max_{1 \leq i \leq N} (f_{i,\tau(\rho)}(s_{i}(t),b_{i}(t),a_{i}(t))$ $-$  $N^{+}_{i,\tau(\rho)}(s_{i}(t),b_{i}(t),a_{i}(t)) ) $ $> 0 $.

Next, the transition probabilities for each source $i$ are estimated using the following empirical mean estimator: 
\begin{eqnarray} \label{eqn:dist_estimate_ms}
&&\hat{w}_{i,t}(s_{i}'|s_i,b_i,a_i)\nonumber\\
&=& \frac{1}{N_{i,t}(s_i,b_i,a_i)^{+}} \bigg(\sum_{h=max\{(\tau(\rho)-W),1\}}^{t-1} \mathbf{1}\{s_{i}(h)=s_i,  \nonumber\\ 
&&b_{i}(h)=b_i,a_{i}(h)=a_i, s_{i}(h+1)=s_i'\}\bigg) 
\end{eqnarray}
\normalsize

and its corresponding confidence regions are given as follows: 
\begin{eqnarray} \label{eqn:confidence-region_ms}
&&\mathcal{H}_{w,i,t}(s_i,b_i,a_i)\nonumber\\
&\doteq& \bigg\{{w_i}:\lVert {w_i}(\cdot|s_i,b_i,a_i)-\hat{w}_{i,t}(\cdot|s_i,b_i,a_i)\rVert_{1} \nonumber\\
 & \leq& \sqrt{\frac{14S_i \log(2A_i\tau(\rho) / \delta)}{N_{i,\tau(\rho)}(s_i,b_i,a_i)^{+}}}\bigg\} 
\end{eqnarray}
\normalsize

Here,   $\delta$ is the confidence parameter and $confr_{w,i,\tau(\rho)}(s_i,b_i,a_i)=\sqrt{\frac{14S_i \log(2A_i\tau(\rho) / \delta)}{N_{i,\tau(\rho)}(s_i,b_i,a_i)^{+}}}$ is the confidence radius for source $i$.

In the WIT-SW-UCRL2 algorithm, the DM adheres to a stationary policy $\Bar{\pi}_{\tau(\rho)}$ throughout the episode $\rho$, and the optimal source scheduling and sampling decision is determined based on Whittle's index and the threshold values  of the sources. At each time $t$ during episode $\rho$, the WIT-SW-UCRL2 algorithm first performs the Whittle's index and threshold comparison steps by selecting the $i$-th source having the highest Whittle's index. Importantly, the Whittle's index of any source $i$ is computed at the beginning of an episode~$\rho$  by solving an MDP under the empirical mean estimators $\hat{\lambda}_{i,\tau(\rho)}$ and $\{\hat{q}_{j,i,\tau(\rho)}\}_{1 \leq j \leq m}$ available  at the beginning of the episode~$\rho$.  If $b_i(t)=1$, the algorithm further checks if  $K_{i}(t) \geq Thres_{i}(t)$, where the computation of the threshold $Thres_{i}(t)$ is done   based on $E_i(t)$ and $C(i,t)$. If $K_{i}(t) \geq Thres_{i}(t)$, the algorithm chooses  $a_{i}(t)=1$ provided that $E_{i}(t) \geq E_s$. On the other hand, if $K_{i}(t) < Thres_{i}(t)$, then for the $i$-th source the sampling policy $\tilde{\pi}_{\tau(\rho)}$  is determined by using EVI as long as $E_{i}(t) \geq E_s$. The algorithm utilizes the prior history over a window length $W$ starting from $\max\{\tau(\rho)-W,  1\}$ to  $(\tau(\rho)-1)$ to find the confidence regions $\mathcal{H}_{w,i, \tau(\rho)}(s_i,a_i)$ for all $(s_i,a_i)$ pairs of the source $i$. Here  $p(i,t) $, confidence regions $\mathcal{H}_{w, i,\tau(\rho)}(s_i,b_i,a_i)$ for all $(s_i,b_i,a_i)$ pairs of source $i$, and precision parameter $\epsilon=1 / \sqrt{ \tau(\rho)}$) are used as input parameters for EVI and can be found by following similar steps as in Algorithm~\ref{alg:EVI_cap}.

\begin{algorithm}
\caption{Whittle's Index and Threshold based Sliding Window  UCRL2 (WIT-SW-UCRL2) Algorithm }\label{alg:cap_ms}
\begin{algorithmic}
\State $\bm{Input}$: $\mathcal{S}_i,\mathcal{A}_i, K_i,B_i, T, N,W,$ $ \hspace{0.1 cm}\forall 1 \leq i \leq N$.
\State $\bm{Initialize}$: $\hat{\lambda}_{i,0}=0$, $\hat{q}_{j,i,0}=\frac{1}{m}$,   $t\leftarrow 1$,  initial state   $s_{i}(1), \hspace{0.1 cm} \forall i,j$.
\For{ $\text{episodes}$ $ \rho=1, 2,3 \cdots$} 
\State $\bm{Set}$: $\tau(\rho) \leftarrow t$, $f_{i,\tau(\rho)}(s_i,b_i,a_i)  \leftarrow 0$, $\forall (s_i,b_i,a_i)$ of each $i$. 
\State $\bm{Compute}$: $\hat{\lambda}_{i,\tau(\rho) }$, $\hat{q}_{j,i,\tau(\rho) }$ and $N_{i,\tau(\rho)}(s_i,b_i,a_i)^{+}$ for all $(s_i,b_i,a_i)$ of each source $i$ by using \eqref{eqn:lambda_estimate_ms}, \eqref{eqn:q_estimate_ms} and \eqref{eqn:state-action-count_ms}, respectively.
\State $\bm{Compute}$: $\mathcal{WI}_i$,  for all  source $i$ by running VI for the MDP with parameters $\hat{\lambda}_{i,\tau(\rho) }$, $\hat{q}_{j,i,\tau(\rho) }$.
%\State $\alpha_{\hat{q}_j}(\tau(\rho))= \hat{q}_j_{(\tau(\rho)-1)} n_{(\tau(\rho)-1) }+1,$ 
%\State $\beta_{\hat{q}_j}(\tau(\rho))= (1-\hat{q}_j_{(\tau(\rho)-1)}) n_{(\tau(\rho)-1) }+1$
\While{ $t$ is not multiple of W and $\max_{1 \leq i \leq N} (f_{i,\tau(\rho)}(s_{i}(t),b_{i}(t),a_{i}(t))$ $-$  $N^{+}_{i,\tau(\rho)}(s_{i}(t),b_{i}(t),a_{i}(t)) ) \leq 0 $}
\State Select $ i^*= \arg \max_{i} \mathcal{WI}_i$ such that $E_{i^*}(t)\geq E_s$.
Observe the current channel state $C(i^*,t)$ and consequently  $p(i^*,t)$.
\State $\bm{Compute}$: $\alpha_{i^*}(\tau(\rho))= e_{i^*,\tau(\rho)-1}+1,$ 
\State $\beta_{i^*}(\tau(\rho))= n_{\tau(\rho)-1 }-e_{i^*,\tau(\rho)-1} +1 $.
\State  $Thres_{i^*}(t)$  $= max\big(\frac{\alpha_{i^*}(\tau(\rho))+\beta_{i^*}(\tau(\rho))}{E_{i^*}(t)\alpha_{i^*}(\tau(\rho))}, 1/p(i^*,t)\big) $, 
%\State $ Thres_{\hat{q}_j}(t)= \frac{\alpha_{\hat{q}_j}(\tau(\rho))+\beta_{{\hat{q}_j}}(\tau(\rho)}{p_j \alpha_{{\hat{q}_j}}(\tau(\rho)) }$
\If{ Age $K_{i^*}(t) \geq Thres_{i^*}(t) $} \\
Choose $a_{i{^*}}(t)=1$,  observe age cost $c_{i^*}(s_{i{^*}}(t),b_{i{^*}}(t),a_{i{^*}}(t))$ and the next state $s_{i{^*}}(t+1)$.\\
Update: $f_{i^*,\tau(\rho)}(s_{i{^*}}(t),b_{i{^*}}(t),a_{i^*}(t))  \leftarrow f_{i^*,\tau(\rho)}(s_{i^*}(t),b_{i^*}(t),a_{i^*}(t))+1$, $ t \leftarrow t+1$.
\Else{  Find a $ (1 / \sqrt{\tau(\rho)})$ optimal source sampling policy $\tilde{\pi}_{\tau(\rho)}$ by using EVI algorithm using confidence region $\mathcal{H}_{w,i{^*},\tau(\rho)}(s_{i^{*}},b_{i^{*}},a_{i^{*}})$, $\forall$ $(s_{i{^*}},b_{i{^*}},a_{i{^*}})$ computed by \eqref{eqn:confidence-region_ms}:} \\
EVI$(\mathcal{H}_{w,i{^*},\tau(\rho)}, p(i^*,t);1 / \sqrt{ \tau(\rho)})$ $\rightarrow$ $(\tilde{\pi}_{\tau(\rho)})$. \\
Choose $a_{i^{*}}(t)=\tilde{\pi}_{\tau(\rho)}(s_{i^{*}}(t))$,  observe age cost $c_{i^*}(s_{i^*}(t),b_{i^*}(t),a_{i{^*}}(t))$ and the next state  $s_{i{^*}}(t+1)$.\\ 
Update: $f_{i^*,\tau(\rho)}(s_{i^*}(t),b_{i^*}(t),a_{i^*}(t))  \leftarrow f_{i^*,\tau(\rho)}(s_{i^*}(t),b_{i^*}(t),a_{i^*}(t))+1$, $ t \leftarrow t+1$.
\EndIf
%\Else{ $a_i(t)=0$,  observe age cost $C(s_{i,t},b_{i,t},a_{i,t})$ and $s_{i,t+1}$}.
\State  for any  $i \neq i^*$ choose $ b_{i}(t)=0$ and $a_{i}(t)=0$ , 
 observe age cost $c_{i}(s_{i}(t),b_{i}(t),a_{i}(t))$ and $s_{i}(t+1)$.\\
Update: $f_{i,\tau(\rho)}(s_{i}(t),b_{i}(t),a_{i}(t))  \leftarrow f_{i,\tau(\rho)}(s_{i}(t),b_{i}(t),a_{i}(t))+1$, $ t \leftarrow t+1$.
%\EndIf
\If{$t>T$}
\\The algorithm is terminated. 
\EndIf
\EndWhile
\EndFor
\end{algorithmic}  
\end{algorithm}

The WIT-SW-UCRL2 algorithm uses a fixed window size  given by $W= \max_i 4 |\mathcal{S}_i|^{2/3} |\mathcal{A}_i|^{1/2} T^{1/2}(V_{\lambda,i}+V_{q,i})^{-1/2}$, which depends on the variation budgets.  Nevertheless, in the WIT-BORL algorithm presented in the next subsection, we alleviate this restriction. 

\subsection{WIT-BORL algorithm}\label{section: WIT BORL}
Here we present our WIT-BORL algorithm for minimizing the average age across all the sources in a non-stationary multiple EH source system model with unknown variation budgets $V_{\lambda, i}$ and $V_{q,i}$, $\forall i \in \{1,2,\cdots,N\}$. This algorithm adjusts the window size for various blocks to accommodate learning with the assumption that the maximum age of any source $i$ is bounded above by $K_{i,max}$. The entire time horizon is partitioned into $\lfloor \frac{T}{L} \rfloor$ blocks of uniform duration $L$, with the possibility that the length of the last block is less than or equal to $L$. The window selection for WIT-BORL is similar to the window selection of AEC-BORL which uses a probabilistic method to choose a window size $W_{\theta}$ per block $\theta \in \lfloor \frac{T}{L} \rfloor $ from a finite collection $\mathcal{Q}$. At the beginning of each block $\theta$, the WIT-BORL algorithm initially chooses a window size $W_{\theta} \in \mathcal{Q}$ using an EXP3.P master algorithm as in Section~\ref{section:Age Aware BORL}. It then executes the WIT-SW-UCRL2 algorithm with the selected window size $W_{\theta}$ as a sub-routine to choose actions for block $\theta$. The overall cost across all the sources of block $\theta$ is fed back to the master, and subsequently, the probabilities of selecting a window size are adjusted correspondingly. The Whittle's index and threshold comparison steps are done in WIT-BORL for source probing and source sampling, respectively because WIT-SW-UCRL2 is called as a subroutine in WIT-BORL.

The parameters used in the WIT-BORL are the same as in the AEC-BORL algorithm (Section \ref{parameters-AEC-BORL}) except that $B$, $K_{max}$,  and $A$ in \eqref{BORL_parameter1} are replaced by  $\max_i B_i$, $\max_i K_{i,max}$,  and $A_i$ (since the action space is same for all $i$)  in the calculation of $L$. Also, the steps involved in the EXP3.P algorithm of WIT-BORL for choosing window size are similar to that of  EXP3.P algorithm of AEC-BORL (Section \ref{parameters-AEC-BORL}). The only difference is that the WIT-BORL algorithm at the beginning of each block $\theta$ observes the state $s_{i}((\theta-1)L+1),\forall i$ (since we have multiple sources), and calculates the probability distribution by using \eqref{BORL_parameter3}  which determines the window size $W_{\theta}$.
Next, the WIT-BORL algorithm with starting states $s_{i}((\theta-1)L+1), \forall i$ chooses the best course of action for each round ~$t$ in block $\theta$ by executing the WIT-SW-UCRL2 algorithm with window size $W_{\theta}$, and  monitors the total costs $\sum _{i=1}^N  \tilde{C}_{i,\theta}(W_{\theta}, s_{i}((\theta-1)L+1))$ at the end of each block $\theta$, and then updates $g_{l,\theta+1}$ by \eqref{BORL_parameter4} using the replacement $  \tilde{C}_{\theta}(W_{\theta}, s((\theta-1)L+1))$ $\leftarrow$  $(\sum _{i=1}^N  \tilde{C}_{i,\theta}(W_{\theta}, s_{i}((\theta-1)L+1)))/N$. 

\vspace{-6pt}
\section{Numerical Results}\label{section: numerical_results}
\subsection{Stationary system model ($N=1$)}\label{subsection:numerical result Stationary system model}
We consider $m=5$ (number of channel states) and i.i.d. $Bernoulli(\lambda)$ energy arrival process. The channel state occurrence probabilities are given by $\bm{q}=[0.2,0.2,0.2,0.2, 0.2]$, with their respective packet success probabilities $\bm{p}=[0.9, 0.7, 0.5, 0.3, 0.1]$. We assume $B=9$, $K_{max}=100$ and  $E_{s}=1$ unit.  Numerical exploration reveals that the optimal source sampling policy is a threshold policy on either $K$ or $p(C)$, which validates Conjecture~\ref{conjecture:SSSP-with-fading-policy-structure}. It is observed from Figure ~\ref{fig_Thres}(a) that the threshold $K_{th}(E,C)$ decreases with $E$, $\lambda$ and $p(C)$,  since the EH node tries to sample a process more aggressively if there is higher energy available in the energy buffer or if the packet success probability is high. Also, Figure~\ref{fig_Thres}(b) shows that $p_{th}(E,K)$ decreases with $E$, $\lambda$ and $K$, since higher energy available in the buffer or higher age $K$ results in more aggressive sampling by the EH node.

%\vspace{-8pt}
\begin{figure}[htb]
  \begin{center}
 \subfloat[]{\includegraphics[height=4cm,width=9cm]{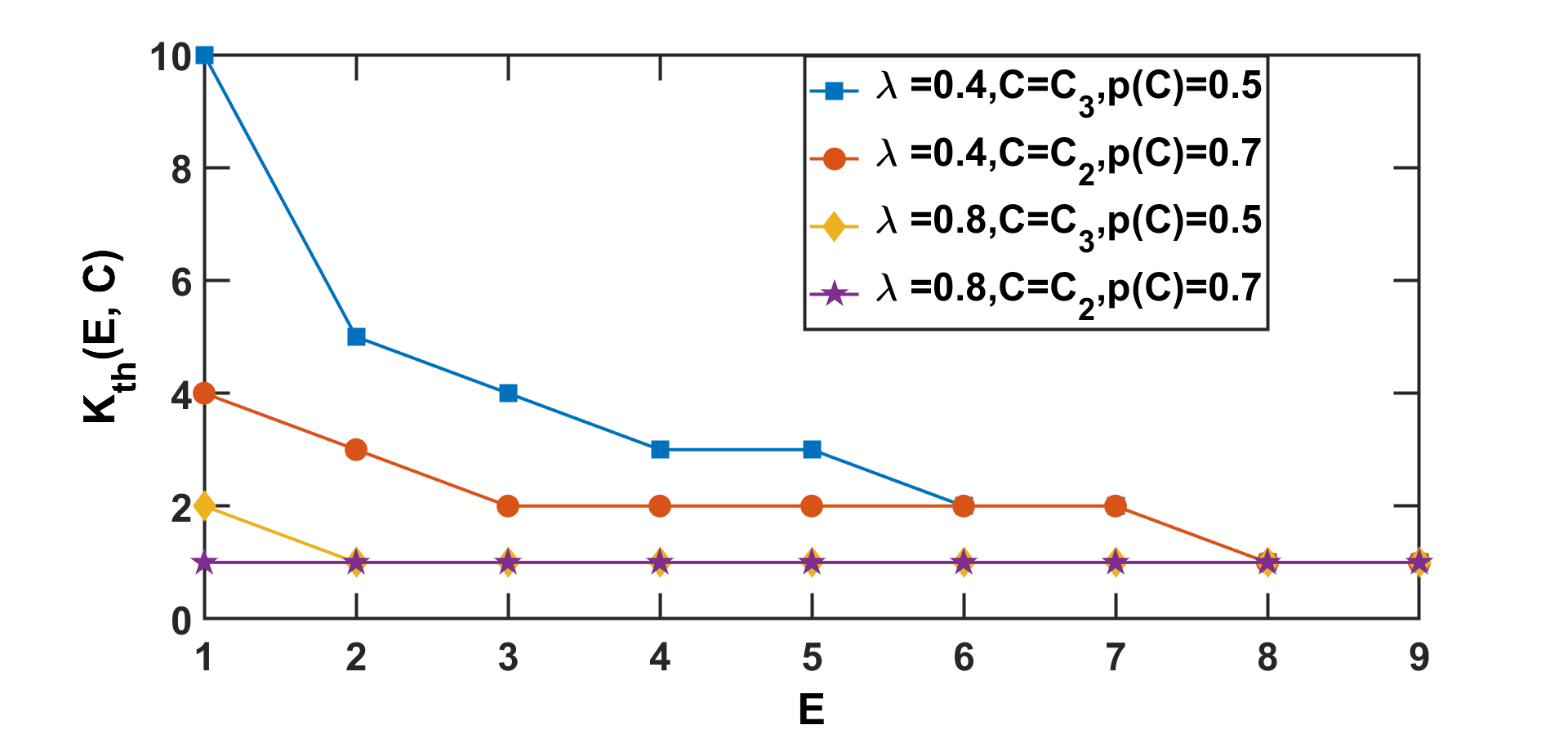}}
  \hfill
  %\vspace{-0.1cm}
  \subfloat[]{\includegraphics[height=4cm,width=9cm]{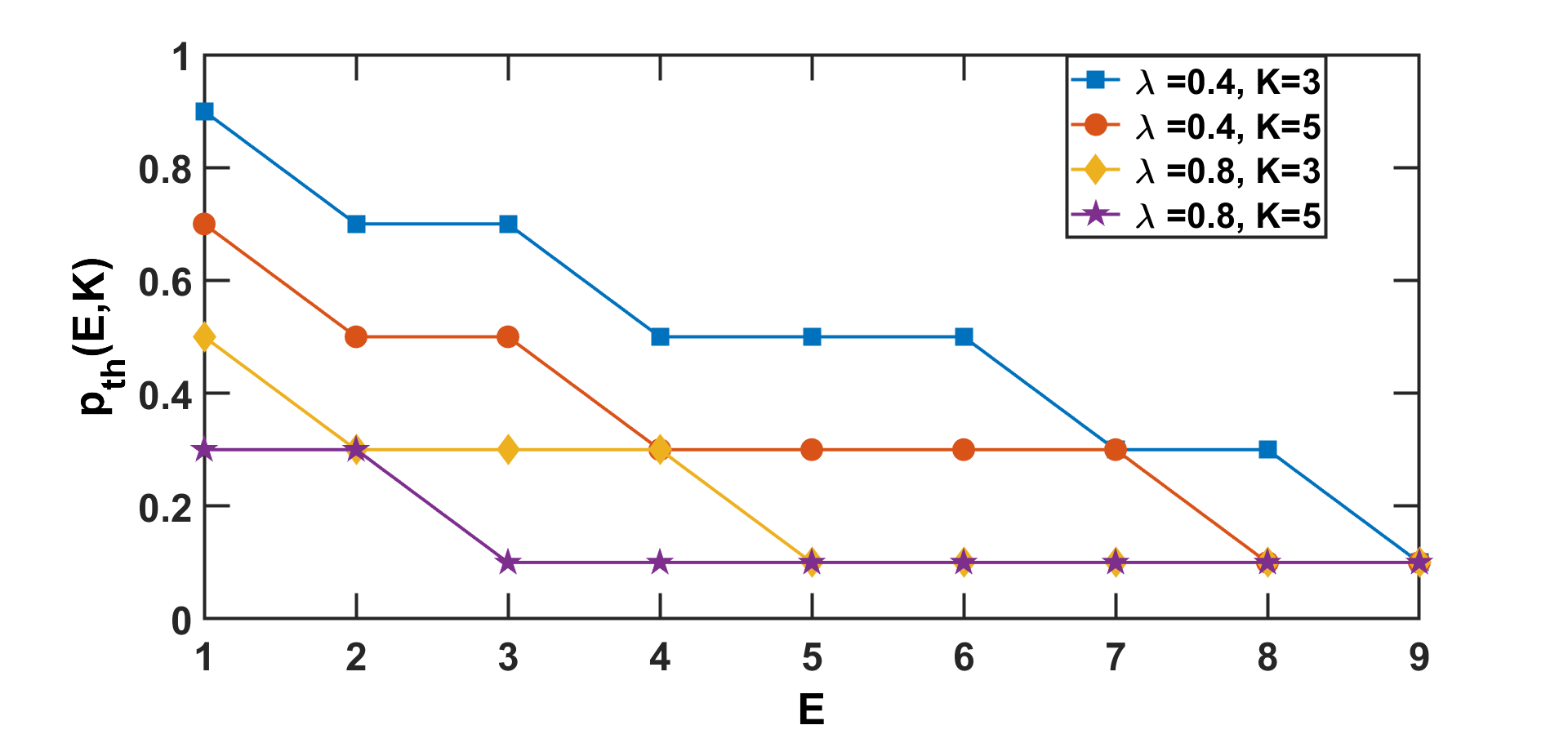}}
 \caption{Stationary system model: Variation of (a) $K_{th}(E, C)$ with $E$, $\lambda$, $p(C)$ and (b) $p_{th}(E,K)$ with $E, K, \lambda$.}
 \label{fig_Thres}
 \end{center}
 \vspace{-18pt}
 \end{figure}

\subsection{Non-Stationary system model ($N=1$)}\label{subsection:numerical result non Stationary system model}
We consider $m=2$, $\bm{q}_{t}=(q_{1,t}, q_{2,t})$, $\bm{p}=[0.8, 0.2]$ $B=7$, $E_{s}=1$ unit, $K_{max}=10$. The time varying  $\lambda_{t}$  and $\bm{q}_{t}$ are modelled by using following functions:
\begin{eqnarray} \label{lambda_t}
&&\lambda_{t}= (0.3+0.2\cos(2 \pi t/4)), \nonumber\\
&&q_{2,t}= (0.5+0.2\sin(2 \pi t/4)), \nonumber\\
&&  q_{1,t}=1-q_{2,t} 
\end{eqnarray}
We set the time horizon $T = 5000$, and compare the cumulative cost of the algorithms after averaging over ten sample paths. 
It is observed from Figure~\ref{fig_CumCost} that our proposed AEC-SW-UCRL2 and AEC-BORL algorithms outperform the standard  SWUCRL2 and BORL algorithms; this happens because, motivated by the theoretical findings in Section~\ref{section:single-sensor-single-process_stationary}, we have introduced one extra step of comparing the age against a threshold. We also observe that AEC-BORL  performs better than AEC-SW-UCRL2 since the former uses a varying window size.  

\begin{figure}[H]
  \begin{center}
 \includegraphics[height=4cm,width=9cm]{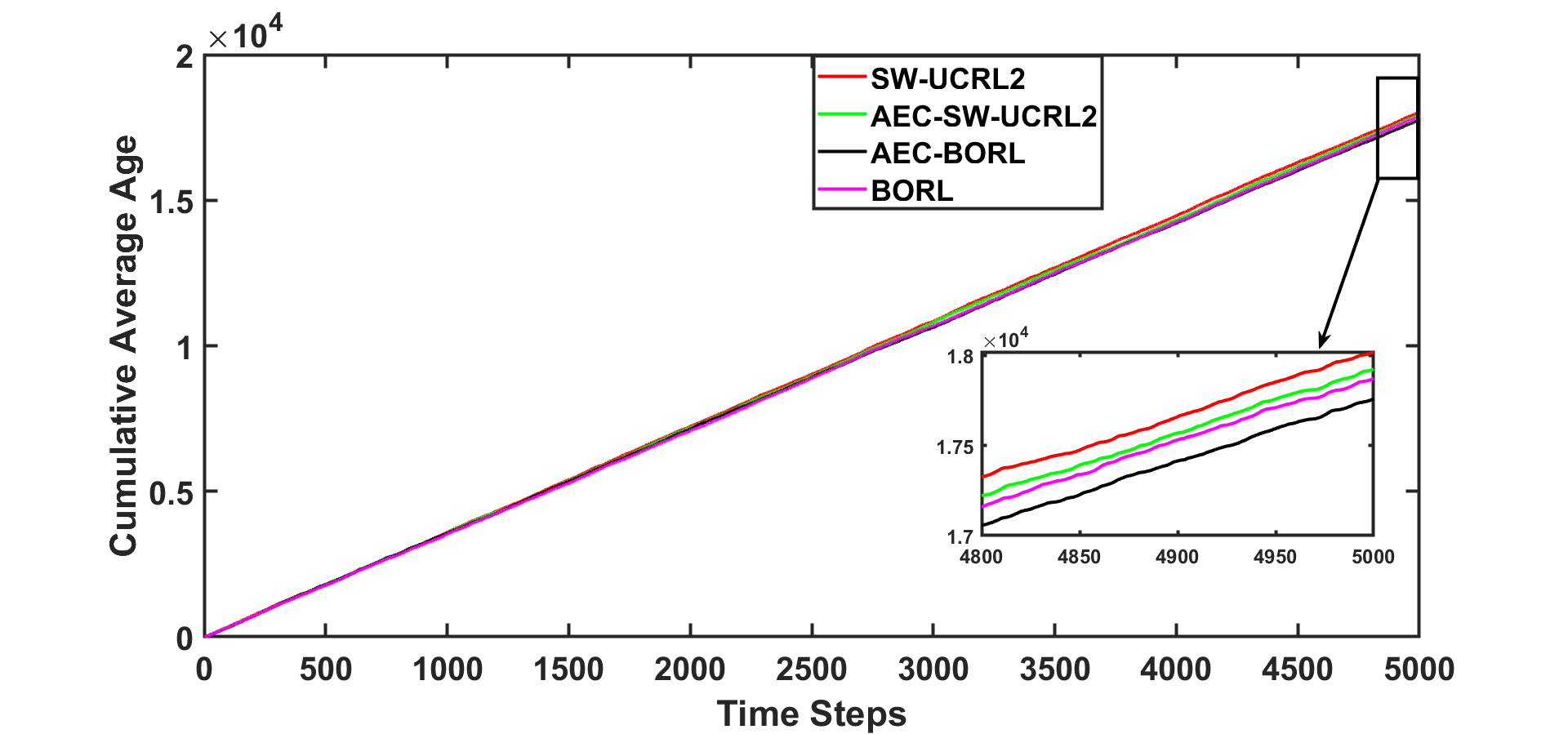}
 \caption{ $N=1$: Cumulative age  for non-stationary model.}
 \label{fig_CumCost}
 \end{center}
% \vspace{-18pt}
 \end{figure}
 
\subsection{Non-Stationary system model ($N>1$)}\label{subsection:numerical result non-Stationary multi-source system model}
Here, we consider $N=3$, $m=2$, $\bm{q}_{i,t}=(q_{1,i,t}, q_{2,i,t})$, $\forall i \in \{1,2,\cdots,N\}$, $\bm{p}=[0.8, 0.2]$, $B_i=7$, $E_{s}=1$ unit, $K_{i,max}=10$, $\forall i$. The time varying  $\lambda_{i,t}$  and $\bm{q}_{i,t}$ for each $i$ are modelled by using following functions:
\begin{eqnarray*}
&& \hspace{1.2 cm} \lambda_{1,t}= (0.3+0.2\cos(2 \pi t/4)),  \nonumber\\
&& \hspace{1.2 cm} \lambda_{2,t}= (0.5+0.2\cos(2 \pi t/4)), \nonumber\\
&& \hspace{1.2 cm} \lambda_{3,t}= (0.7+0.2\cos(2 \pi t/4)), \nonumber\\
&&q_{2,1,t}= (0.3+0.2\sin(2 \pi t/4)),\hspace{0.2 cm} q_{1,1,t}=1-q_{2,1,t}  \nonumber\\
&&q_{2,2,t}= (0.5+0.2\sin(2 \pi t/4)), \hspace{0.2 cm}q_{1,2,t}=1-q_{2,2,t} \nonumber\\
&&q_{2,3,t}= (0.7+0.2\sin(2 \pi t/4)),\hspace{0.2 cm} q_{1,3,t}=1-q_{2,3,t} \nonumber\\
\end{eqnarray*}
For multiple sources working in a non-stationary environment, we examine the  cumulative  age  averaged over   the sources and over multiple sample runs, by setting time horizon $T=5000$. We evaluate  the efficacy of the proposed algorithms by comparing them with the most relevant  baseline policies: (i) Maximum Age Sliding Window UCRL2 (MA-SW-UCRL2), and (ii) Random policy. The policy MA-SW-UCRL2 schedules the source having maximum age for probing the channel state and afterwards utilizes the SW-UCRL2 algorithm for sampling a selected source. The random policy selects a source for probing  uniformly at random. Due to lack of any competing algorithm for our system model, we compare the proposed algorithms with these baseline policies.    Figure~\ref{fig_CumCost-ms} demonstrates that our proposed algorithms WIT-SW-UCRL2 and WIT-BORL have lower cumulative average age as compared to the baselines MA-SW-UCRL2 and random policy. This happens due to the introduction of Whittle's index based source probing policy and  an extra step of age comparison with a threshold.   Next, we also observe that WIT-BORL outperforms WIT-SW-UCRL2 due to the utilization of a variable window size, despite the lack of knowledge of the variation budget.
%These findings can also be generalized to non-i.i.d. time-varying energy arrival and channel state distributions.

 \begin{figure}[H]
  \begin{center}
 \includegraphics[height=4cm,width=9cm]{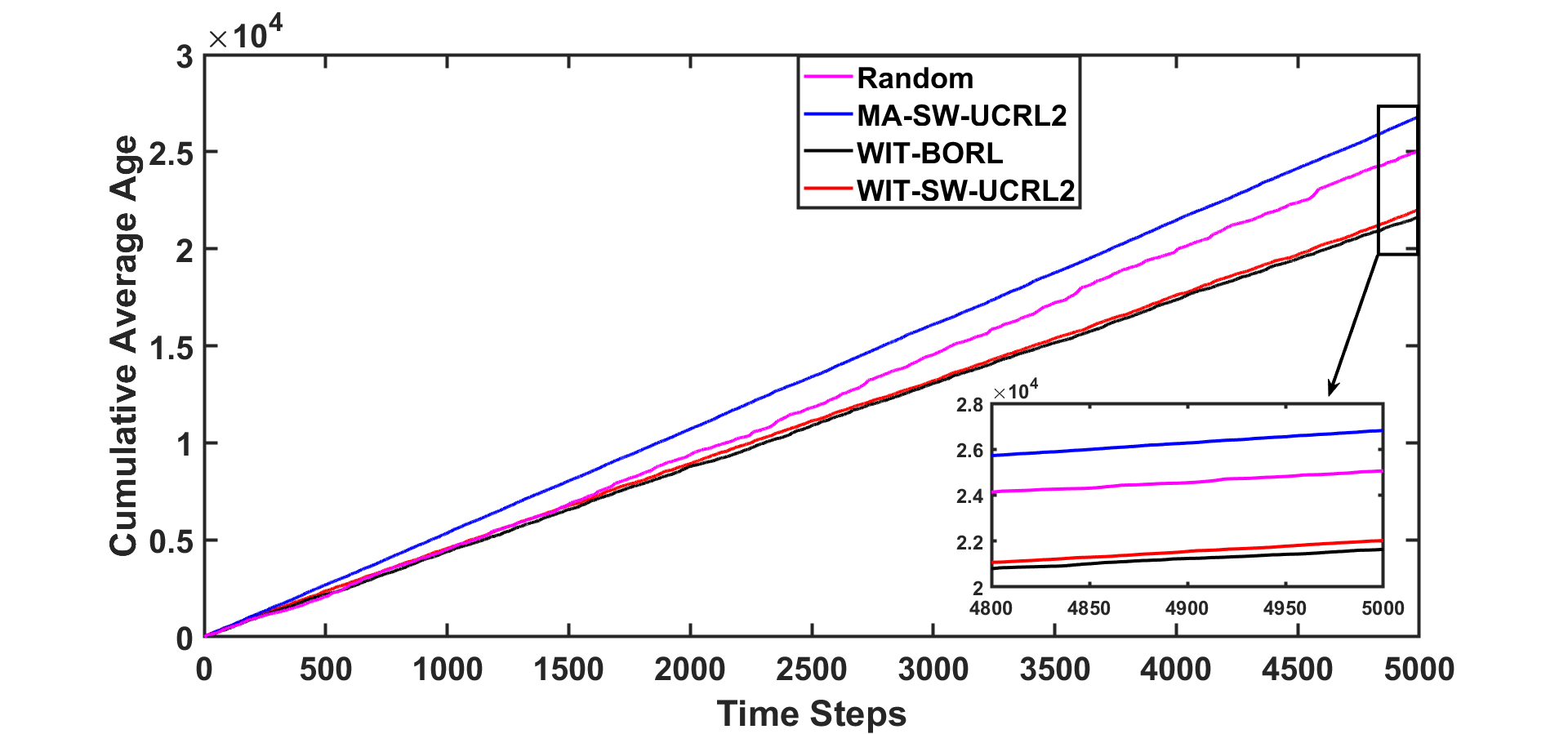}
 \caption{ $N=3$: Cumulative age for non-stationary model.}
 \label{fig_CumCost-ms}
 \end{center}
 \vspace{-6pt}
 \end{figure}
 
\section{Conclusion}\label{section:Conclusion}
In this paper, we have studied the time-averaged expected AoI minimization problem for single/multiple mobile EH source-assisted   remote sensing system. For a single source system working in a stationary environment, we have first shown that the optimal source sampling policy has a threshold structure. Thereafter, for the single source non-stationary system settings, we have proposed AEC-SW-UCRL2 and AEC-BORL algorithms, which are motivated from the literature, but exploit the threshold structure suggested by the results for the stationary setting. Furthermore, for multiple sources working in non-stationary environments, we have proposed WIT-SW-UCRL2 and WIT-BORL algorithms, which leverage Whittle's index based source scheduling policy derived from multiple sources working under a stationary system setting. Numerical results validate the theoretical results and demonstrate that our proposed algorithms outperform existing competing algorithms. In future, we seek to provide regret bounds for our proposed algorithms and also prove Theorem~\ref{theorem:SSSP-policy-K} without making Conjecture~\ref{conjecture:SSSP-with-fading-policy-structure}.

% \appendices
% 
% \section{Proof of Proposition}
% \label{appendix:proof-of}
% 
% Under 

% 
% \vspace{-1cm}
% \begin{IEEEbiography}
% [{\includegraphics[width=1in,height=1in,clip,keepaspectratio]{arpan.png}}]
% {Arpan 
% Chattopadhyay} obtained his B.E. in Electronics and Telecommunication   from Jadavpur University, 
% Kolkata, India in  2008, and M.E. and Ph.D in Telecommunication   from Indian Institute of Science, 
% Bangalore, India, in   2010 and 2015, respectively. He is currently working as an assistant professor in the Electrical Engineering department, IIT Delhi. Previously, he held postdoc  positions  in the Electrical Engineering department, University of Southern California, Los Angeles, USA, and  INRIA/ENS Paris, France. 
% His research interests include wireless networks, IoT,  cyber-physical systems, networked estimation and control. \end{IEEEbiography}
% 
% \vspace{-1cm}
% \begin{IEEEbiography}
% [{\includegraphics[width=1in,height=1in,clip,keepaspectratio]{ubli.jpg}}]{Urbashi Mitra} is a  Professor in the Departments of Electrical
% Engineering and Computer Science, University of Southern California. Previous appointments include Bellcore and the Ohio State
% University. Her honors include: U.S.-U.K. Fulbright, Leverhulme Trust Visiting
% Professorship, Editor-in-Chief IEEE TRANSACTIONS ON MOLECULAR, BIOLOGICAL
% AND MULTISCALE COMMUNICATIONS, IEEE Communications Society
% Distinguished Lecturer, U.S. NAE Galbreth Lectureship, Okawa Foundation
% Award, and an NSF CAREER Award. Her research is in wireless
% communications. \end{IEEEbiography}
%\newpage
\bibliographystyle{IEEEtran}
%\printbibliography
% argument is your BibTeX string definitions and bibliography database(s)
\bibliography{ref_paper.bib}
%\cleardoublepage
%\vspace{12mm}
%\vfill
%\hrule

\appendices
\section{Proof of Lemma~\ref{lemma:SSSP-J-increasing-in-K}}\label{appendix:proof-of-lemma-SSSP-J-increasing-in-K}
 We prove this result by value iteration: 
\begin{eqnarray}
&&J^{(t+1)}(E \geq E_{s},K, C)\nonumber\\ 
&=&min \bigg\{K+\alpha \mathbb{E}_{A, C^'}J^{(t)}(min\{E+A,B\},K+ 1, C^'),\nonumber\\
&&K(1-p(C))+\alpha p(C)\mathbb{E}_{A, C^'}J^{(t)}(min\{E-E_{s}+ A,\nonumber\\
&&B\},1, C^')+\alpha (1-p(C))\mathbb{E}_{A, C^'}J^{(t)}(min\{E-E_{s}+\nonumber\\
&&A,B\},K+1, C^')\}  \bigg\}\nonumber\\ 
&&J^{(t+1)}(E < E_{s},K, C)\nonumber\\ 
&=&K+\alpha \mathbb{E}_{A, C^'}J^{(t)}(min\{E+A,B\},K+1, C^')
\end{eqnarray}

Let us start with $J^{(0)}(s) = 0$ for all $s\in \mathcal{S}$. Clearly,
$J^{(1)}(E \geq E_{s}, K, C) =min \{K, K(1-p(C))\}$   and $J^{(1)}(E < E_{s}, K,C) = K$. Hence, for any given $E$ and $C$, the value function $J^{(1)}(E, K, C)$ is an increasing function of $K$. As an induction hypothesis, we assume that $J^{(t)}(E, K, C)$ is also an increasing function of $K$. Now, 
\begin{eqnarray}\label{value-iteration-ssns-with-fading-1}
&&J^{(t+1)}(E \geq E_{s},K, C)\nonumber\\
&=& min \bigg\{K+\alpha \mathbb{E}_{A, C^'}J^{(t)}(min\{E+A,B\},K+1, C^'),\nonumber\\
&& K(1-p(C))+\alpha p(C)\mathbb{E}_{A, C^'}J^{(t)}(min\{E-E_{s}+A, \nonumber\\
&&B\},1, C^')+\alpha (1-p(C))\mathbb{E}_{A, C^'}J^{(t)}(min\{E-E_{s}+ \nonumber\\
&&A,B\},K+1, C^')\}  \bigg\} 
\end{eqnarray}
\normalsize

We need to show that $J^{(t+1)}(E \geq E_{s},K,C)$ is also increasing in $K$. The first and second terms inside the minimization operation in  \eqref{value-iteration-ssns-with-fading-1} are increasing in $K$, by the induction hypothesis and from the fact that expectation is a linear operation. Thus, $J^{(t+1)}(E \geq E_{s},K,C)$ is  increasing in $K$. By similar arguments, we can claim that $J^{(t+1)}(E < E_{s} ,K,C)$ is increasing in $K$. Now, since $J^{(t)}(\cdot)\uparrow J^{*}(\cdot)$ as $t \uparrow \infty$ 
by \cite[ Proposition 2.1]{bertsekas2011dynamic}, $J^{*}(E,K,C)$ is also increasing in $K$. 

\section{Proof of Theorem~\ref{theorem:SSSP-policy-p}}\label{appendix:proof-of-theorem-SSSP-policy-p}

From \eqref{eqn:Bellman-eqn-SSSP_Probingt}, it is obvious that the optimal decision for $E \geq E_{s} $ is to sample the source if and only if the cost of sampling is lower than the cost of not sampling the source, i.e., 
$K+\alpha \mathbb{E}_{A,C^'}J^{*}(min\{E+A,B\},K+1,C^') \geq K(1-p(C))+\alpha \mathbb{E}_{A,C^'}J^{*}(min\{E-E_{s}+A,B\},K+1,C^')-
\alpha p(C)\bigg(\mathbb{E}_{A,C^'}J^{*}(min\{E-E_{s}+A,B\},K+1,C^')-\mathbb{E}_{A,C^'}J^{*}(min\{E-E_{s}+A,B\},1,C^')\bigg)$.
Now, by Lemma~\ref{lemma:SSSP-J-increasing-in-K}, $\mathbb{E}_{A,C^'}J^{*}(min\{E-E_{s}+A,B\},K+1,C^')- \mathbb{E}_{A,C^'}J^{*}(min\{E-E_{s}+A,B\},1,C^')$ is non-negative. Thus the  L.H.S. is independent of $p(C)$, whereas the R.H.S. is decreasing with $p(C)$. Hence, it is optimal to sample if  $p(C) \geq p_{th}(E,K)$ for some suitable threshold function $p_{th}(E,K)$.

   \section{Proof of Theorem~\ref{theorem:SSSP-policy-K}}\label{appendix:proof-of-theorem-SSSP-policy-K}

By using the similar argument as in the proof of Theorem ~\ref{theorem:SSSP-policy-p},  the optimal decision for $E \geq E_{s} $ is to sample the source if and only if the cost of sampling is lower than the cost of not sampling the source, i.e., 
$K+\alpha \mathbb{E}_{A,C^'}J^{*}(min\{E+A,B\},K+1,C^') \geq K(1-p(C))+\alpha \mathbb{E}_{A,C^'}J^{*}(min\{E-E_{s}+A,B\},K+1,C^')-
\alpha p(C)\bigg(\mathbb{E}_{A,C^'}J^{*}(min\{E-E_{s}+A,B\},K+1,C^')-\mathbb{E}_{A,C^'}J^{*}(min\{E-E_{s}+A,B\},1,C^')\bigg)$ which can be simplified to 
$\alpha \mathbb{E}_{A,C^'}J^{*}(min\{E+A,B\},K+1,C^') -\alpha \mathbb{E}_{A,C^'}J^{*}(min\{E-E_{s}+A,B\},K+1,C^')
\geq -Kp(C))-\alpha p(C)\bigg(\mathbb{E}_{A,C^'}J^{*}(min\{E-E_{s}+A,B\},K+1,C^')-\mathbb{E}_{A,C^'}J^{*}(min\{E-E_{s}+A,B\},1,C^')\bigg)$.

Now, by Conjecture \ref{conjecture:SSSP-with-fading-policy-structure}, the difference $\alpha \mathbb{E}_{A,C^'}J^{*}(min\{E+A,B\},K+1,C^') -\alpha \mathbb{E}_{A,C^'}J^{*}(min\{E-E_{s}+A,B\},K+1,C^')$ is increasing in $K$ and by Lemma~\ref{lemma:SSSP-J-increasing-in-K}, $\mathbb{E}_{A,C^'}J^{*}(min\{E-E_{s}+A,B\},K+1,C^')- \mathbb{E}_{A,C^'}J^{*}(min\{E-E_{s}+A,B\},1,C^')$ is non-negative. Thus the L.H.S. is increasing with $K$, whereas the R.H.S. is decreasing with  $K$. Hence, it is optimal to sample if $K \geq K_{th}(E,C)$ for some suitable threshold function $K_{th}(E,C)$.
\section{Proof of Lemma~\ref{lemma:var_boudget-relation}}\label{appendix:proof-of-var_boudget-relation}
The state transition probabilities  are given by:
% \begin{eqnarray}
% &&w_{t}(s(t+1)|s(t)=(E(t)< E_{s}, K(t),C_i), a(t)=0) \nonumber\\
% &=&
%  \begin{cases}
%  q_{j,t+1}\lambda_{t+1}, \hspace{0.1cm} s(t+1)= (E(t+1)=\min \{E(t)+1,B\},  \\ \hspace{3.2cm} K_{t+1}= \min\{K(t)+1, K_{max}\},\\
% \hspace{3.2 cm}   C_{j} )\hspace{1.3 cm}   \cdots  \bm{\text{Case } 1} \\
%  q_{j,t+1}(1- \lambda_{t+1}),  \hspace{0.1cm} s(t+1)=(E(t+1)= E(t),  \\ \hspace{2.7cm} K_{t+1}= \min\{K(t) +1,K_{max}\}, \\
%  \hspace{2.8cm}C_{j}) \hspace{1.6 cm}\cdots \bm{\text{Case } 2} \\
%  0,  \hspace{2.3cm}\text{otherwise} \hspace{1cm}\cdots \bm{\text{Case } 3}
%  \end{cases}
% \end{eqnarray}
\begin{equation}
\begin{aligned}
&w_{t}(s(t+1) \mid s(t)=(E(t)< E_{s}, K(t), C_i), a(t)=0) \\
&=
\begin{cases}
q_{j,t+1} \lambda_{t+1}, 
&\hspace{-1.2em} \begin{aligned}
&s(t+1) = \big(E(t+1)=\min\{E(t)\\
&\quad +1, B\}, K_{t+1}=\min\{K(t)+1, \\
&\quad  K_{\max}\}, C_j\big) \hspace{1.2em}\cdots  \bm{\text{Case } 1}
\end{aligned} \\\\
q_{j,t+1}(1 - \lambda_{t+1}), 
&\hspace{-1.2em} \begin{aligned}
&s(t+1) = \big(E(t+1)=E(t),\\
&\quad K_{t+1}=\min\{K(t)+1, K_{\max}\}, \\
&\quad  C_j\big)\hspace{4.4em}\cdots  \bm{\text{Case } 2}
\end{aligned} \\
0, & \text{otherwise} \hspace{2em} \cdots  \bm{\text{Case } 3}
\end{cases}
\end{aligned}
\end{equation}
% \begin{eqnarray}
% &&w_{t}(s(t+1)|s(t)=(E(t)\geq E_{s}, K(t),C_i), a(t)=1) \nonumber\\
% &=&
%  \begin{cases}
%  q_{j,t+1}\lambda_{t+1}p_{j}, \hspace{0.8cm} s(t+1)= (E(t+1)=\min \{E(t) \\ \hspace{3cm}-E_{s}+ 1,B\},K_{t+1}= 1, C_{j})\cdots \\\hspace{6cm}\cdots \bm{\text{Case } 4} \\
%  q_{j,t+1}(1-\lambda_{t+1})p_{j}, \hspace{0.1cm} s(t+1)= (E(t+1)=E(t)-E_{s},\\ \hspace{3cm}  K_{t+1}= 1,C_{j})\hspace{1cm}\cdots \bm{\text{Case } 5} \\
%  q_{j,t+1} \lambda_{t+1}(1-p_{j}),  \hspace{0.1cm} s(t+1)=(E(t+1)=\min \{E(t)- \\ \hspace{2.8cm}E_{s}+1,B\},K_{t+1}=\min\{K(t)+1,\\
%   \hspace{2.9cm}  K_{max}\},C_{j}) \hspace{1.1cm}\cdots \bm{\text{Case } 6} \\
%   q_{j,t+1} (1-\lambda_{t+1})(1-p_{j}),  \hspace{0.1cm} s(t+1)=(E(t+1)=E(t) \\ \hspace{3.3cm} -E_{s},K_{t+1}=\min\{K(t)+1, \\ \hspace{3.1cm}K_{max}\}, C_{j} )
%    \hspace{1cm} \cdots\bm{\text{Case } 7}  \\
%  0,  \hspace{2.9cm}\text{otherwise} \hspace{1.6cm}\cdots \bm{\text{Case } 8}
%  \end{cases}
% \end{eqnarray}
\begin{equation}
\begin{aligned}
&w_{t}(s(t+1) \mid s(t)=(E(t)\geq E_{s}, K(t),C_i), a(t)=1) \\
&= 
\begin{cases}
q_{j,t+1}\lambda_{t+1}p_{j}, 
&\hspace{-1.2em} \begin{aligned}
&s(t+1)= \big(E(t+1)=\min\{\\
&\quad E(t)-E_s+1,B\},K_{t+1}= \\&\quad 1, C_j\big)\hspace{5em}\cdots  \bm{\text{Case } 4}
\end{aligned} \\\\
q_{j,t+1}(1-\lambda_{t+1})p_{j}, 
&\hspace{-1.2em} \begin{aligned}
&s(t+1)= \big(E(t+1)=E(t)-\\ 
&\quad E_s, K_{t+1}=1, C_j\big) \hspace{0.0em} \cdots  \bm{\text{Case } 5}
\end{aligned} \\\\
q_{j,t+1} \lambda_{t+1}(1-p_{j}),
&\hspace{-1.2em} \begin{aligned}
&s(t+1)=\big(E(t+1)=\min\{\\
&\quad E(t)-E_s+1,B\},K_{t+1}=\\& \quad \min\{K(t)+ 1, K_{\max}\}, C_j\big)\\&\quad 
\hspace{6.6em} \cdots  \bm{\text{Case } 6}
\end{aligned} \\\\
q_{j,t+1} (1-\lambda_{t+1})(1-p_{j}),
&\hspace{-1.2em} \begin{aligned}
&s(t+1)=\big(E(t+1)=E(t)-E_s,\\ 
&\quad K_{t+1}=\min\{K(t)+1,K_{\max}\},\\
&\quad  C_j\big) \hspace{4.7em}  \cdots  \bm{\text{Case } 7}
\end{aligned} \\
0, & \text{otherwise} \quad  \hspace{1.5em} \cdots  \bm{\text{Case } 8}
\end{cases}
\end{aligned}
\end{equation}

\normalsize

From the above two equations, we can write the increments in $w_t$ for various cases as follows:
\begin{eqnarray*}
&&\bigg|\delta w_{t} (s(t+1)|s(t)=(E(t)< E_{s}, K(t),C_i), a(t)=0)\bigg| \nonumber\\
&=&
 \begin{cases}
  \big|\delta q_{j,t+1}\big|\lambda_{t+1}+\big|\delta  \lambda_{t+1}\big|q_{j,t+1}   \hspace{1.7cm} \cdots \bm{\text{Case } 1} \\
 \big|\delta q_{j,t+1}\big|(1-\lambda_{t+1})+\big|\delta (1- \lambda_{t+1})\big| q_{j,t+1}   \cdots \bm{\text{Case } 2}\\
 0    \hspace{5.5cm}\cdots \bm{\text{Case } 3}
 \end{cases}
\end{eqnarray*}
\begin{eqnarray*}
&&\bigg|\delta w_{t} (s(t+1)|s(t)=(E(t)\geq E_{s}, K(t),C_i), a(t)=1)\bigg|\nonumber\\ 
&=&
 \begin{cases}
 \big|\delta q_{j,t+1}\big|\lambda_{t+1}p_{j}+ \big|\delta \lambda_{t+1}\big| q_{j,t+1}p_{j},  \hspace{1.5 cm}\cdots \bm{\text{Case } 4}  \\
 \big|\delta q_{j,t+1}\big|(1-\lambda_{t+1})p_{j}+ \big|\delta (1-\lambda_{t+1})\big| q_{j,t+1}p_{j}  \hspace{0cm} \cdots \bm{\text{Case } 5}  \\
  \big|\delta q_{j,t+1}\big|\lambda_{t+1}(1-p_{j})+ \big|\delta \lambda_{t+1}\big| q_{j,t+1}(1-p_{j}) \hspace{0cm}  \cdots \bm{\text{Case } 6}\\
 \big|\delta q_{j,t+1}\big|(1-\lambda_{t+1})(1-p_{j})+ \big|\delta (1-\lambda_{t+1})\big| q_{j,t+1}(1-p_{j})  \\
 \hspace{6.3cm} \cdots\bm{\text{Case } 7}  \\
 0  \hspace{6.1cm}\cdots \bm{\text{Case } 8}
 \end{cases}
\end{eqnarray*}
\normalsize

If $\lambda_{t+1} \in [\lambda_{min}, \lambda_{max}]$ and $ q_{j,t+1} \in  [q_{min},  q_{max}]$ $,  \forall j, \forall t$,  it follows that for every possible state action pair:
\begin{eqnarray*}
\big|\delta w_{t} (\cdot|s,a)\big| &\leq& \big|\delta q_{j,t+1}\big|\lambda_{max} + \big|\delta \lambda_{t+1}\big| q_{max} \\
&\leq &   \big|\delta q_{j,t+1}\big|+ \big|\delta \lambda_{t+1}\big| 
\end{eqnarray*}
Hence, $V_{w,t}\leq V_{\lambda,t+1}+V_{q,t+1}$, which implies    $V_{w}\leq V_{\lambda}+V_{q}$.

%\end{singlespace} 

\end{document}